\documentclass[manuscript,screen]{acmart}

\AtBeginDocument{%
  \providecommand\BibTeX{{%
    \normalfont B\kern-0.5em{\scshape i\kern-0.25em b}\kern-0.8em\TeX}}}

\setcopyright{acmlicensed}

\usepackage{graphicx} 
\usepackage{amsmath,verbatim,mathrsfs,latexsym,paralist}
\usepackage{indentfirst}
\usepackage{color}
\usepackage{amsthm}
\usepackage{xspace}
\newtheorem{Thm}{Theorem}
\newtheorem{Prop}{Property}
\newtheorem{lem}{Lemma}

\newtheorem{Es}{Example}

\usepackage{algorithm}

\usepackage{algorithm}
\usepackage{algpseudocode}
\usepackage{mathtools}
\usepackage[colorinlistoftodos,prependcaption,textsize=tiny]{todonotes}
\usepackage{calrsfs}
\usepackage{dutchcal}
\usepackage[most]{tcolorbox}

\setcopyright{acmlicensed}
\copyrightyear{2018}
\acmYear{2018}
\acmDOI{XXXXXXX.XXXXXXX}
\acmISBN{978-1-4503-XXXX-X/2018/06}
\begin{document}

%%
%% The "title" command has an optional parameter,
%% allowing the author to define a "short title" to be used in page headers.
\title{Optimal and heuristic strategies for evaluating the influence of coordinated behavior in information cascades and retweet networks}

\author{Niccolò Di Marco}
\email{niccolo.dimarco@unitus.it}
\orcid{0000-0003-4335-7328}

\author{Matteo Cinelli}
\email{matteo.cinelli@uniroma1.it}
\orcid{0000-0003-3899-4592}

\author{Shinichi Nakano}
\email{nakano@gunma-u.ac.jp}
%\orcid{0000-0003-3899-4592}

\author{Andrea Frosini}
\email{andrea.frosini@unifi.it}
\orcid{0000-0001-7210-2231}

\renewcommand{\shortauthors}{Di Marco et al.}

\begin{abstract}
Coordinated Inauthentic Behavior (CIB) has become a major concern in online social platforms, yet its actual impact on information diffusion remains poorly understood. Existing research has primarily focused on detecting coordinated activity, while comparatively little attention has been devoted to quantifying its influence once detected.
In this work, we introduce two complementary frameworks for the post-hoc evaluation of coordinated accounts. First, we formulate the problem on information cascades as a constrained influence maximization problem over directed trees and develop a polynomial-time dynamic programming algorithm that computes the optimal placement of coordinated nodes, providing an upper bound on their achievable influence. Second, motivated by the limited availability of diffusion cascades in real-world platforms, we propose a network-based framework that estimates influence directly from retweet networks using the independent cascade model and compares the observed placement of coordinated accounts against established heuristic baselines.
We evaluate both approaches on Twitter/X data from the 2019 UK General Election and on a collection of verified state-backed information operation campaigns spanning multiple countries. While coordinated accounts exhibit limited influence in the UK cascades, the network-based analysis reveals substantial differences across campaigns, with several operations achieving influence comparable to or exceeding that of structurally central seed sets. Finally, by reconstructing cascades from the retweet networks, we show that the two frameworks produce consistent results, suggesting that the observed effects reflect intrinsic structural properties of coordinated activity rather than artifacts of the underlying methodology.
%
% Our work provides a unified framework for quantifying the influence of coordinated behavior across both diffusion cascades and retweet networks, offering practical tools for assessing the potential impact of coordinated information operations.
\end{abstract}

\begin{CCSXML}
<ccs2012>
<concept>
<concept_id>10003033.10003068</concept_id>
<concept_desc>Networks~Network algorithms</concept_desc>
<concept_significance>300</concept_significance>
</concept>
<concept>
<concept_id>10010405</concept_id>
<concept_desc>Applied computing</concept_desc>
<concept_significance>300</concept_significance>
</concept>
<concept>
<concept_id>10002950.10003624.10003633.10003634</concept_id>
<concept_desc>Mathematics of computing~Trees</concept_desc>
<concept_significance>300</concept_significance>
</concept>
<concept>
<concept_id>10002950.10003624.10003633.10010917</concept_id>
<concept_desc>Mathematics of computing~Graph algorithms</concept_desc>
<concept_significance>500</concept_significance>
</concept>
</ccs2012>
\end{CCSXML}

\ccsdesc[300]{Networks~Network algorithms}
\ccsdesc[300]{Applied computing}
\ccsdesc[300]{Mathematics of computing~Trees}
\ccsdesc[500]{Mathematics of computing~Graph algorithms}

%%
%% Keywords. The author(s) should pick words that accurately describe
%% the work being presented. Separate the keywords with commas.
\keywords{Social media, Coordinated Inauthentic Behavior, Information Operations, Influence Maximization}

% \received{20 February 2007}
% \received[revised]{12 March 2009}
% \received[accepted]{5 June 2009}

%%
%% This command processes the author and affiliation and title
%% information and builds the first part of the formatted document.
\maketitle

\section{Introduction}

The advent of social media has radically reshaped how people produce, share, and consume information. Crucially, it has also transformed how influence and propaganda are delivered to users \cite{ferrara2016rise, woolley2020bots}. Although the actual impact of such information operations is debated—with journalists and governments claiming definitive effects on political outcomes (e.g., \cite{guardian_russia_2024}) and scientific work often finding a lack of measurable effects \cite{woolley2020bots, eady2023exposure}—the underlying fact remains that these campaigns do take place \cite{Cresci2020decade,serafino2024suspended, CIB_Survey, hristakieva2022spread, Cinus2025exposing, Loru2026}. Consequently, within this heated debate, detecting potentially malicious campaigns in online environments is fundamental to preserving information and electoral integrity, and, more broadly, safeguarding democracy and freedom on the web.

Many of these campaigns manifest in the form of coordinated inauthentic behavior (CIB), defined by Meta as “the use of multiple assets, working in concert to engage in inauthentic behavior, where the use of fake accounts is central to the operation” \cite{gleicher2018coordinated}. More broadly, inauthentic behavior refers to practices such as misrepresenting identity, artificially amplifying content, or enabling other violations of platform policies. Closely related, and more easily measurable using data available from private social media companies \cite{wagner2023independence}, is the notion of Coordinated Behavior (CB), which captures suspicious or exceptional similarity in user actions, independent of their authenticity \cite{Nizzoli2021}.

Research has largely focused on detecting coordinated and inauthentic behavior \cite{pacheco2021uncovering, luceri2023unmasking, nwala2023language, Loru2026}. However, beyond detection, a fundamental question remains largely unexplored: to what extent do coordinated accounts actually influence the diffusion of information?

This question is particularly relevant in the current digital landscape, where the rise of Large Language Models (LLMs) has made artificially managed accounts increasingly realistic, scalable, and difficult to distinguish from genuine users. As these technologies lower the cost of generating convincing content and interacting at scale, understanding the actual influence exerted by coordinated actors becomes increasingly important for assessing their potential impact on online information ecosystems \cite{Kupferschmidt2026}.

A previous work has introduced a possible mathematical framework to address this problem \cite{DiMarco2025}, but it captures only a limited portion of its overall complexity, partly due to the challenges associated with obtaining detailed information cascade data.
In this work, we extend this line of research by both refining the mathematical framework and systematically evaluating the results produced by different approaches to measuring influence.

First, we develop a new optimal algorithm for computing maximum influence in directed tree structures, with particular relevance to cascades involving coordinated behavior.
Second, given the difficulty of accessing cascade data, we propose an alternative heuristic framework that operates directly on retweet networks populated by coordinated accounts.
The combination of these two approaches provides complementary perspectives, enabling a more comprehensive assessment of influence and highlighting both similarities and differences across modelling strategies.

To evaluate our framework, we rely on two datasets. The first consists of retweet cascades associated with posts published during the 2019 UK elections, providing a setting in which information diffusion can be directly observed \cite{Nizzoli2021}. The second comprises retweet networks derived from multiple coordinated information operations conducted across different countries, capturing large-scale patterns of coordinated activity in diverse contexts \cite{dataset_menczer}.

While the optimal algorithm does not identify a meaningful impact of CB in information cascades related to UK elections, a different picture emerges when using the retweet network setting, where campaigns from different countries exhibit varying levels of influence, suggesting substantial heterogeneity across campaigns, with patterns that in some cases align with country-level groupings.

To better understand whether these differences stem from methodological constraints or from the nature of the datasets, we further extend our analysis by constructing cascades directly from the retweet networks and reapplying the original framework. The results show that, also in this setting, the impact of CIB remains comparable to that observed in the retweet network analysis, pointing to a strong consistency between the two approaches.

Our framework provides a consistent and robust assessment of the role of coordinated actors in information diffusion. By bridging cascade-based and network-based perspectives, it offers a unified approach to better understand and quantify the influence of coordinated behavior across different contexts.

\section{Related works}
A central phenomenon in online ecosystems is the rise of coordinated activity \cite{pacheco2021uncovering, Tardelli2024temporal, Cinelli2022}, which frequently involves either automated agents or groups acting in a strategically organized manner \cite{hristakieva2022spread, weber2021amplifying}. Tools such as bots \cite{ferrara2016rise, Cresci2020decade}, sockpuppet accounts, and troll infrastructures \cite{Cheng2017anyone} are commonly deployed to mimic genuine user interactions, artificially inflate engagement, and shape perceived public opinion \cite{badawy2018analyzing,Shu2020,Starbird2019disinformation}. These activities, broadly categorized as coordinated inauthentic behavior (CIB), are often designed to promote specific narratives, marginalize dissenting voices, or fabricate the appearance of widespread grassroots endorsement—commonly referred to as astroturfing \cite{stella2018bots,luceri2019red, keller2020political, schoch2022coordination, Ratkiewicz2021detecting}. By producing large volumes of temporally aligned actions across multiple accounts, such campaigns can exploit platform algorithms, overwhelm information spaces, and influence users’ perceptions of credibility and popularity \cite{ratkiewicz2011truthy,badawy2018analyzing, Cinelli2019information}.

At the same time, not all coordinated patterns stem from manipulation. Online platforms also host forms of organic synchronization, where users independently converge toward similar behaviors without explicit coordination \cite{bennett2012logic,weller2013twitter,Centola2018}. Phenomena such as viral trends, memes, and hashtag cascades often generate aligned attention dynamics, typically driven by emotional contagion or shared identities rather than deliberate strategy \cite{castillo2014characterizing,pacheco2021uncovering}. Distinguishing between intentional coordination and emergent collective behavior is therefore essential for understanding how influence propagates in digital systems.

Recent developments in generative Artificial Intelligence have further expanded this landscape. In particular, Large Language Models (LLMs) are increasingly used to simulate group interactions and collective dynamics. Studies suggest that interacting LLM agents can spontaneously develop shared conventions and biases \cite{Ashery2025}, with group size playing a critical role in shaping these dynamics \cite{Flint2025}. At the same time, concerns have been raised about the potential misuse of coordinated AI agents in influence operations, which could intensify risks to democratic information environments \cite{Schroeder2025}. Notably, evidence shows that autonomous LLM-based agents are capable of reproducing coordination patterns similar to those observed in real-world information campaigns, even in the absence of direct human control \cite{Orland2025}.

In this landscape, identifying coordinated behavior poses significant methodological challenges \cite{alothali2018detecting, Cresci2020decade, Pant2025beyond}. Current approaches rely on a combination of techniques, including similarity analysis, temporal alignment, content duplication, and network-based indicators, to uncover coordinated entities across platforms \cite{DiMarco2025, Tardelli2024temporal, Minici2025iohunter, Cinus2025exposing}. In politically sensitive or toxic contexts, coordinated actors often display burst-like activity patterns during critical events, simultaneously engaging with the same content while evading detection mechanisms \cite{Cinelli2021dynamics, Loru2024, Cresci2019capability}. Although such campaigns are typically short in duration, they can significantly affect narrative construction, agenda-setting processes, and the emotional tone of online discussions.

For a more in-depth perspective on coordinated inauthentic behavior, readers may consult dedicated survey works such as \cite{CIB_Survey}.

\section{Data and Methods}
\subsection{Dataset}\label{sec:data}
To evaluate our framework, we rely on two complementary data sources: one for information cascades and one for retweet networks. We describe them in the following paragraphs.

\paragraph{Information cascades}
Our dataset is based on a collection of tweets related to the online debate surrounding the 2019 United Kingdom general election, originally presented in~\cite{Nizzoli2021}. Data were collected using the official Twitter API through a combination of hashtag-based queries (e.g., \#GeneralElection19, \#VoteLabour, \#VoteConservative) and timeline downloads of political parties and their leaders. The collection spans one month, from November 12 to December 12, 2019, and includes 11,264,820 tweets posted by 1,179,659 distinct users. The dataset is publicly available at \url{https://doi.org/10.5281/zenodo.4647893}.

The identification of coordinated accounts follows the methodology proposed in~\cite{Nizzoli2021}. The procedure consists of: (i) selecting influential users (super-spreaders), (ii) defining a similarity measure between users (e.g., cosine similarity), (iii) constructing a user similarity network via pairwise comparisons, (iv) filtering this network, and (v) applying clustering techniques. This process yields clusters of users associated with a continuous coordination score. In subsequent work~\cite{cinelli2022coordinated}, coordinated accounts are identified by retaining users connected by the top 1\% of similarity scores.

We employ information cascades from~\cite{cinelli2022coordinated}, which were constructed using a heuristic that we briefly summarize here. First, we observe that the impossibility of reconstructing information cascades exactly is due entirely to limitations of the Twitter API, where retweet data do not directly encode the true diffusion paths. In particular, if a user $j$ retweets a post by user $i$, and another user $k$ retweets the same content through $j$, the API records edges $(i \rightarrow j)$ and $(i \rightarrow k)$, resulting in a star-like structure that does not reflect the actual propagation process.

To address this issue, cascades are reconstructed by leveraging the underlying followers network. The approach assumes that users are more likely to retweet content from the most recent retweeter they follow. Under this assumption, the previous example would be reconstructed as $(i \rightarrow j)$ and $(j \rightarrow k)$, provided that $k$ follows $j$. When no such relationship exists, the node is treated as disconnected. As a result, each cascade may consist of a forest of directed trees rather than a single connected component. 
From these cascades, we assign a binary label to each node indicating whether the corresponding user is identified as participating in coordinated behavior according to the adopted detection algorithm. We emphasize that this labeling captures coordinated behavior (CB) rather than coordinated inauthentic behavior (CIB), as the available data do not allow the authenticity of the accounts to be established.

Applying this reconstruction procedure to 49,331 tweets yields an equal number of information cascades. Summary statistics of the resulting dataset are reported in Table~\ref{tab:twitter_trees}.
\begin{table}[!ht]
\centering
\begin{tabular}{lrrrrrr}
  \toprule
       Number of cascades & 49331 \\
\midrule 
       Minimum number of nodes & 1 \\
       Maximum number of nodes & 9066 \\
       Minimum number of coordinated accounts & 0 \\
       Maximum number of coordinated accounts & 236 \\
\bottomrule
    \end{tabular}
    \caption{Data breakdown of the cascades collected from Twitter.}
    \label{tab:twitter_trees}
\end{table}
\paragraph{Retweet networks}
We use a publicly available Information Operations dataset \cite{dataset_menczer} available at \url{https://zenodo.org/records/14189193}. 

The dataset consists of a collection of coordinated campaigns identified and released by a major social media platform as part of its transparency efforts, following the takedown of accounts involved in state-backed information operations. These campaigns span multiple countries and contexts, and include accounts that have been verified as participating in coordinated and inauthentic behavior. Importantly, the dataset does not only capture isolated posts, but reconstructs the activity timelines of the involved accounts, allowing for a detailed analysis of their behavior over time.

A key feature of this dataset is the inclusion of a carefully curated control group. For each campaign, control accounts are selected among users who engaged with similar topics—identified through shared hashtags and temporal overlap—but are not associated with coordinated activity. This design enables a meaningful comparison between coordinated and organic behavior, capturing both similarities in content and differences in interaction patterns. Moreover, the dataset includes not only posts directly related to the campaigns, but also additional activity from the same users, providing a broader view of their behavior.

For our purposes, we consider each campaign separately. For each dataset, we construct a network $G$ in which nodes are users and $(u,v)\in E$ if $v$ reposts a post from $u$.  
To retain only significant connections and ensure networks of manageable size, we apply the disparity filter \cite{Serrano2009} with $\alpha = 0.05$, and consider only the largest weak connected component for subsequent analysis.  
We then associate a binary label to the vertices of the networks, indicating whether it belongs to a coordinated group. Recall that, in this case, this label corresponds to platform-identified information-operation accounts.

In the analysis that follows, we focus on networks that, after this procedure, contain at least 100 nodes and at least one coordinated account. Table \ref{tab:summary_networks} shows summary statistics of the selected networks.
\begin{table}[ht]
\centering
\begin{tabular}{lrrrrrr}
  \toprule
Dataset & $\lvert V \rvert$ & $\lvert E \rvert$ & $<d>$ & $\lvert V_1 \rvert $ & $\frac{\lvert V_1 \rvert}{\lvert V \rvert}$ \\ 
  \midrule
  Uae & 5875 & 11835  & 4.03 & 2148 & 0.37 \\ 
  Thailand & 549 & 717  & 2.61 &  84 & 0.15 \\ 
  Spain & 275 & 634  & 4.61 & 140 & 0.51 \\ 
  Russia 5 & 4311 & 5463  & 2.53 &  12 & 0.00 \\ 
  Russia 4 & 10141 & 12865  & 2.54 &   7 & 0.00 \\ 
  Russia 2 & 4291 & 6032  & 2.81 &  61 & 0.01 \\ 
  Russia 1 & 5579 & 15541  & 5.57 & 1354 & 0.24 \\ 
  Quatar & 7144 & 10554  & 2.95 &  20 & 0.00 \\ 
  Iran 6 & 4967 & 6498 & 2.62 &  67 & 0.01 \\ 
  Iran 4 & 2443 & 3017  & 2.47 & 269 & 0.11 \\ 
  Iran 3 & 7355 & 10292  & 2.80 & 141 & 0.02 \\ 
  Iran 2 & 3309 & 5298  & 3.20 &  34 & 0.01 \\ 
  Iran 1 & 930 & 1695  & 3.65 & 158 & 0.17 \\ 
  Ghana nigeria & 246 & 280  & 2.28 &  37 & 0.15 \\ 
  Egypt uae & 415 & 940 & 4.53 & 165 & 0.40 \\ 
  Ecuador & 5875 & 10371  & 3.53 & 304 & 0.05 \\ 
  Cuba & 11512 & 38523  & 6.69 & 422 & 0.04 \\ 
  China 1 & 11765 & 15177  & 2.58 & 150 & 0.01 \\ 
  Catalonia & 897 & 1281  & 2.86 &  25 & 0.03 \\ 
   \bottomrule
\end{tabular}
\caption{Summary statistics of the networks considered in our analysis. $<d>$ denotes the mean degree of the network and $\lvert V_1 \rvert$ the number of coordinated accounts.}
\label{tab:summary_networks}
\end{table}
\subsection{Notation and Problem Definition}\label{sec:notation}
In this section, we introduce the notation and methodological framework used throughout the paper. Our goal is to retrospectively quantify the influence exerted by coordinated inauthentic behavior (CIB) in the dissemination of content. To this end, we adopt two complementary perspectives: a cascade-based approach, focusing on individual diffusion trees, and a general network-based approach, operating on aggregated retweet networks.

\paragraph{Common Notation}
We consider a directed graph $G = (V, E)$, where $V$ is the set of nodes and $E$ is the set of edges. For a node $v \in V$, we denote its out-neighborhood (excluding $v$ itself) as $N(v)$ and its out-degree as $d(v)$.

Each node $v$ is associated with a binary label $l(v) \in \{0,1\}$. We refer to $v$ as a $1$-node if $l(v) = 1$, and as a $0$-node otherwise. The labeling induces a partition of the node set $V = V_1^l \cup V_0^l$, where $V_i^l = \{ v \in V \mid l(v) = i \}$. The number of $1$-nodes is denoted by $k_l = |V_1^l|$. In our considered setting, the labeling $l$ captures whether a node (i.e. user) belongs to a coordinated group.

\paragraph{Directed Trees}
We first consider a directed tree $T = (V, E)$, representing the diffusion of a single post in a social media platform (e.g., Twitter/X). Each tree is rooted at a source node $r$, and we denote its size as $|T| = |V|$. 

To quantify the influence of coordinated accounts, we define, for each node $v$, the quantity $d_0^l(v) = |\{ w \in N(v) \mid l(w) = 0 \}|$, that is, the number of non-coordinated nodes in the out-neighborhood of $v$. When the context is clear, we omit the superscript $l$. This measure captures the number of non-coordinated users directly influenced by a coordinated node.

The overall influence of a configuration $(T,l)$ is then defined as
\begin{equation}\label{eq:influence}
    I(T,l) = \sum_{v \in V_1^l} d_0(v),
\end{equation}
i.e., the total number of non-coordinated nodes directly reached by coordinated ones.

We consider the following optimization problem:

\begin{tcolorbox}[colback=gray!20, colframe=gray!50, sharp corners]
\noindent {\bf Constrained Influence Maximization Problem} $(CIMP(T,k))$:

\smallskip

\noindent Given a directed tree $T$ and an integer $0 \leq k \leq |T|$, find a labeling $\bar{l}$ such that $|V_1^{\bar{l}}| = k$ and
$I(T,\bar{l}) = \max_{l:\, |V_1^l| = k} I(T,l)$.
\end{tcolorbox}

The solution $\bar{l}$ is referred to as an \emph{optimal labeling}, and the corresponding value $I(T,\bar{l})$ as the \emph{maximal influence}. Note that multiple optimal labelings may exist \cite{DiMarco2025, DiMarco2025_Constrained_Maximization_Tree}. Solving this problem provides the theoretical maximum of \eqref{eq:influence}. This value can then be used to normalize the influence achieved by the observed coordinated accounts, yielding a ratio bounded by $1$.

\paragraph{Retweet Networks}\label{sec:model}
We now consider networks that aggregate multiple diffusion processes.
In this setting, to model information spreading, we adopt the independent cascade model, a well-established framework in the literature \cite{Kempe2003,Kempe2005,Mossel2007}. Starting from a set of seed nodes $S \subseteq V$, the process unfolds in discrete time steps. When a node $v$ becomes active for the first time, it has a single chance to activate each neighbor $w \in N(v)$ with probability $p_{vw}$, which represents the strength of the social tie between $v$ and $w$. If successful, $w$ becomes active at the next time step. The process continues until no further activations occur. In our analysis, we assume a uniform infection probability $p$ for all edges. This choice allows us to isolate the effect of the network topology and the positioning of coordinated accounts, without introducing additional assumptions on the strength of individual interactions. Nevertheless, the framework naturally extends to edge-specific infection probabilities whenever such information is available.

Let $A_\infty(S)$ denote the set of active nodes at the end of the process. The influence of a seed set $S$ is defined as
\[
I(S)=\mathbb{E}[|A_\infty(S)|].
\]
that is, the expected number of activated nodes, where $\mathbb{E}$ denotes expectation. Since exact computation is generally intractable, $I(S)$ is typically estimated via Monte Carlo simulations \cite{Kempe2003,Kempe2005}. We follow the same approach here, estimating it through computer simulations.

To evaluate the influence of coordinated accounts, we set $S = V_1^l$, i.e., we measure the long-term influence obtained by $1-$nodes.

However, evaluating influence solely on observed coordinated nodes is not sufficient to fully assess their impact. A meaningful analysis requires comparison with alternative placement strategies. Since identifying the optimal seed set is NP-hard \cite{Kempe2005}, we benchmark observed configurations against two heuristics:
\begin{itemize}
    \item[\textbf{K-shell decomposition:}] Nodes are assigned a coreness index $k_s$ through an iterative pruning process \cite{Seidman1983,Carmi2007}. Given $k$ coordinated accounts, we compare the influence achieved by them with that obtained by selecting the $k$ nodes with the highest $k_s$. This method has been used to identify influential spreaders \cite{Kitsak2010} in networks.
    \item[\textbf{Random selection:}] Nodes are selected uniformly at random, providing a null baseline.
\end{itemize}
To provide a concrete example, Figure \ref{fig:simulations} shows the results of the independent cascade model applied to the {\it Catalonia} network, comparing influences across the three different sets of seed nodes. 
\begin{figure}[!ht]
    \centering
    \includegraphics[width=\linewidth]{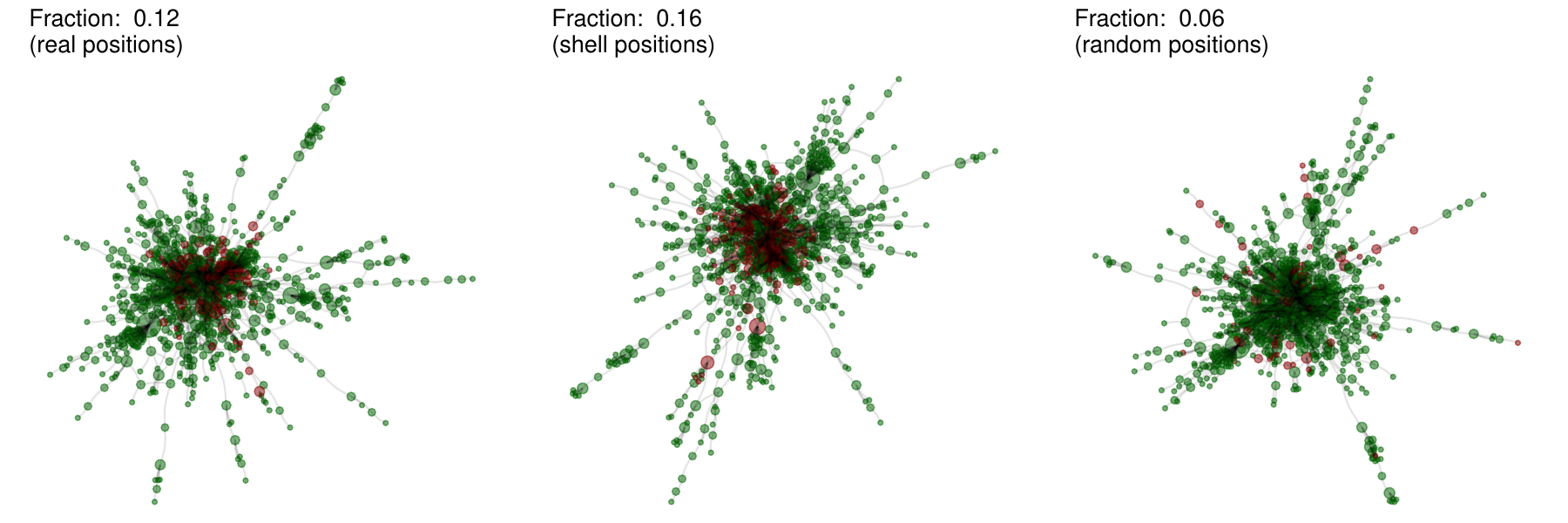}
    \caption{Results of the cascade models applied to three different sets of seed nodes: the first uses the positions of real IO accounts, while the second and third correspond to shell and random positions, respectively. The infected nodes are highlighted in red, and their fraction is reported.}
    \label{fig:simulations}
\end{figure}
\subsection{Dynamic programming solution of $CIMP(T,k)$}
In this section, we show that $CIMP(T,k)$ can be solved in polynomial time with respect to the number of nodes of $T$ through a dynamic programming approach. 
Let $T$ be a directed tree and $v$ one of its nodes. We indicate as $T_v$ the (directed) subtree of $T$ whose root is $v$ and as $l_v$ a label of the subtree rooted at $v$.
Finally, we denote with $CIMP_0(T_v,k)$ [resp. $CIMP_1(T_v,k)$] a solution $l$ of $CIMP(T_v,k)$ with the constraints that $l(v)=0$ [resp. $l(v)=1$] and $k\leq |T_v|$. 

\begin{Prop}\label{prop:maxchildren}
    Let $T$ be a tree rooted at $v$ and $v_1,\dots,v_d$ children of $v$. Each optimal labeling $l_v$ of $T_v$ is formed by $l_{v_1}, \dots, l_{v_d}$, labelings of $T_{v_1},\dots,T_{v_d}$ that are optimal.  
\end{Prop}

% \begin{Prop}\label{prop:maxchildren}
%     Let $T$ be a tree rooted at $v$, with $v_1,\dots,v_n$ children of $v$ and $k = k_1 + \ldots k_n$. The following statements hold: 
%     \begin{description} 
%     \item{$(i)$} it exists a non-maximal labeling $l_v$ with $\lvert l_v \rvert = k$ such that all $l_{v_1}, \ldots , l_{v_n}$, with $\lvert l_{v_i}\rvert = k_i$, are maximal;  
%     \item{$(ii)$} it exists a labeling $l_v$ that is maximal, and such that the labelings $l_{v_1}, \ldots , l_{v_n}$, with $\lvert l_{v_i}\rvert = k_i$, are also maximal.  
% \end{description}
% \end{Prop}
%An example of $(i)$ is in Fig.~\ref{fig:kcounterexample}.

\proof 
Let us proceed by contradiction, assuming that the optimal labeling $l$ contains, w.l.g., the label $l_{v_1}$ that is not optimal. The following two cases arise according to the value of $l(v)$: 

\begin{itemize}
\item if $l(v) = 0$, we have $I(T_v,l_v) = I(T_{v_1},l_{v_1}) + \ldots + I(T_{v_d}, l_{v_d})$. Therefore, the non-optimal labeling  $l_{v_1}$ can be substituted by an optimal one, thus increasing the total influence, against the optimality of $l_v$.
\item if $l(v)=1$ again $l_{v_1}$ can be changed into a maximal labeling $l_{v_1}'$. However, this does not guarantee, in general, that the total influence increases. In fact, the following situation may be present: $l(v)=1$, $l(v_1)=0$, $l'(v_1)=1$. In this case, changing from $l_{v_1}$ to $l_{v_1}'$ decreases the influence obtained by $v$ by one. However, by hypothesis we have $I(T_{v_1},l_{v_1}) < I(T_{v_1},l_{v_1}')$, so the new influence on $T_v$ is equal or greater than the previous one. The label $l$ is already optimal. \qed     
\end{itemize}

Property~\ref{prop:maxchildren} enables us to design a dynamic programming strategy. 

The strategy relies on the computation of $S_{0,v}(i)$ and $S_{1,v}(i)$ that, for each node $v$ of $T$ and for each $i = 0,1,\ldots,k$, store the values of the maximal influences associated with their related subtree $T_v$ when the label of $v$ is either $0$ or $1$ and the number of coordinated nodes is $i$.
In particular, the $i$-th entries $S_{0,v}(i)$ (resp. $S_{1,v}(i)$) is defined as $I(T_v,l_v)$, with $l_v=CIMP_0(T_v,i)$ (resp. $l_v=CIMP_1(T_v,i)$). 
We associate to each node a vector of length $k+1$. If $i> |T_v|$ we set $S_{0,v}(i)=S_{1,v}(i)=-\infty$.

If $v$ is a leaf, the only feasible configurations are the following:
\[
S_{0,v}(0)=0, \qquad S_{1,v}(1)=0.
\]
All other entries are infeasible and are therefore set to $-\infty$. In other
words, for a leaf $v$,
\[
S_{0,v}(i)=
\begin{cases}
0 & \text{if } i=0,\\
-\infty & \text{otherwise,}
\end{cases}
\qquad
S_{1,v}(i)=
\begin{cases}
0 & \text{if } i=1,\\
-\infty & \text{otherwise.}
\end{cases}
\]

If $v$ is the root of $T$, then $\max\{S_{0,v}(i),S_{1,v}(i)\}$ stores the maximal influence of one labeling of $CIMP(T,i)$. Therefore, its $k$-th entry provides the required solution to $CIMP(T,k)$. 

Relying on Property~\ref{prop:maxchildren}, the following Lemma \ref{lem:solution}, proved in \cite{DiMarco2025_Constrained_Maximization_Tree}, shows that these two vectors can be defined in terms of the children $v_1,\dots,v_d$ of $v$ as follows

% \begin{equation}\label{eq:s0s1}
% \begin{cases}
%       S_{0,v}(i) = \max\limits_{i_1+i_2+\dots+i_t=i}\{ \max\{S_{0,v_1}(i_1),S_{1,v_1}(i_1)\}+\dots + \max\{S_{0,v_d}(i_d),S_{1,v_d}(i_d)\} \} \\
%    S_{1,v}(i) = \max\limits_{i_1+i_2+\dots+i_t=i-1}\{ \max\{S_{0,v_1}(i_1),S_{1,v_1}(i_1)\}+\dots + \max\{S_{0,v_d}(i_t),V_{1,v_d}(i_d)\} + d_0(v)\} 
% \end{cases}
% \end{equation}

\begin{equation}\label{eq:s0s1}
\begin{cases}
S_{0,v}(i)=
\displaystyle
\max_{i_1+\cdots+i_d=i}
\sum_{h=1}^{d}
\max\{S_{0,v_h}(i_h),S_{1,v_h}(i_h)\},
\\[1.2em]
S_{1,v}(i)=
\displaystyle
\max_{i_1+\cdots+i_d=i-1}
\sum_{h=1}^{d}
\max\{S_{0,v_h}(i_h)+1,S_{1,v_h}(i_h)\}.
\end{cases}
\end{equation}

\begin{lem}\label{lem:solution}\cite{DiMarco2025_Constrained_Maximization_Tree}
For each node $v$ of a directed tree $T$, the values $S_{0,v}(i)$ and $S_{1,v}(i)$ computed according to equations (\ref{eq:s0s1}), store the influences of the labels that are solutions of $CIMP_0(T_v,i)$ and  $CIMP_1(T_v,i)$, respectively. 
\end{lem}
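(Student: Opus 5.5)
We argue by strong induction on the height of $T_v$, proving simultaneously for $S_{0,v}$ and $S_{1,v}$ that the recurrence value equals the maximum of $I(T_v,l_v)$ over labelings with $|V_1^{l_v}|=i$ and $l_v(v)=0$ (resp.\ $l_v(v)=1$). Infeasible pairs are assigned $-\infty$, under the convention that a maximum over an empty set is $-\infty$. For a leaf the two vectors given explicitly in the text are exactly right: the only labelings are $l(v)=0$ with $i=0$ and $l(v)=1$ with $i=1$, both of influence $0$.

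For the inductive step, let $v$ have children $v_1,\dots,v_d$. The key identity is a decomposition of the influence. Any labeling $l_v$ of $T_v$ restricts to labelings $l_{v_h}$ of the disjoint subtrees $T_{v_h}$, and every out-edge of a node other than $v$ lies inside one of these subtrees. Hence
\[
I(T_v,l_v)=\sum_{h=1}^d I(T_{v_h},l_{v_h}) + l(v)\,d_0(v),
\]
and, when $l(v)=1$, $d_0(v)=\sum_h [\,l(v_h)=0\,]$. So in the case $l(v)=1$ the term of $v$ splits as a sum of contributions $+1$, one for each child labeled $0$. Moreover $|V_1^{l_v}|=l(v)+\sum_h |V_1^{l_{v_h}}|$. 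The tuples $(l_{v_h})_h$ range over all combinations of labelings of the subtrees independently, with no cross constraints except the total count.

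From this, maximizing over $l_v$ with $l(v)=0$ and $|V_1|=i$ becomes maximizing over splits $i_1+\dots+i_d=i$. For each split we independently maximize $I(T_{v_h},l_{v_h})$ over labelings of $T_{v_h}$ with $i_h$ ones, splitting further according to $l(v_h)\in\{0,1\}$. By the induction hypothesis the two sub-maxima are $S_{0,v_h}(i_h)$ and $S_{1,v_h}(i_h)$, which gives the first line of (\ref{eq:s0s1}). For $l(v)=1$ the count becomes $i-1$, and the summand for child $h$ is $I(T_{v_h},l_{v_h})+[l(v_h)=0]$. Maximizing over $l(v_h)$ yields $\max\{S_{0,v_h}(i_h)+1,\,S_{1,v_h}(i_h)\}$, giving the second line.

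The step requiring the most care is the exchange of max and sum: showing that the joint maximum over tuples equals the sum of per-child maxima. This holds because the objective is separable across children once the counts $i_h$ and the child labels are fixed, and the feasible set is a product. This is precisely the content of Property~\ref{prop:maxchildren}, though I would argue it directly to avoid relying on its informal phrasing. I will also check that the $-\infty$ conventions propagate correctly, so that infeasible splits (some $i_h>|T_{v_h}|$) never attain the maximum, and that the case $i-1<0$ gives $-\infty$.
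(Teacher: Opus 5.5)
Your proof is correct. It follows the induction-over-subtrees scheme the paper points to; the paper does not prove the lemma itself, but cites it and says it relies on Property~\ref{prop:maxchildren}. The difference lies in how you justify the key step. You derive both lines of (\ref{eq:s0s1}) from the exact identity $I(T_v,l_v)=\sum_h I(T_{v_h},l_{v_h})+l(v)\sum_h[l(v_h)=0]$ and the product structure of the feasible set. Crucially, you condition on each child's root label before exchanging max and sum.

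That is not quite ``precisely the content'' of Property~\ref{prop:maxchildren}. Read literally (every optimal labeling restricts to optimal child labelings), the property fails when $l(v)=1$. On the path $v\to u\to w$ with $k=2$, the labeling with $V_1=\{v,w\}$ is optimal, with influence $1$. Yet its restriction to $T_u$ has influence $0$, while $V_1=\{u\}$ achieves $1$ on $T_u$. The paper's own proof of the property concedes this equality case. What does hold is optimality of the restriction among labelings with the same count \emph{and} the same root label of the child (here $S_{0,u}(1)=0$). That is exactly what the split into $S_{0,v_h},S_{1,v_h}$ and the $+1$ term encode, and exactly what your conditioning proves.

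Your direct argument therefore gives a self-contained justification with no gap. Property~\ref{prop:maxchildren} alone does not determine the $S_1$ line, because it is silent on the root label of the optimal child labeling, which is what the $+1$ depends on. Your explicit $-\infty$ bookkeeping is also worthwhile. The paper's convention only sets $-\infty$ for $i>|T_v|$, leaving entries such as $S_{1,v}(0)$ and $S_{0,v}(|T_v|)$ implicit; the recurrence correctly sets these to $-\infty$.
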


The intuition behind Eq.~\eqref{eq:s0s1} is the following. To compute $S_{0,v}(i)$, we distribute the $i$ coordinated nodes among the subtrees rooted at the children of $v$. Each partition $i_1+\cdots+i_d=i$ specifies how many coordinated nodes are assigned to each subtree. For a fixed partition, each child $v_j$ independently contributes its best possible influence, namely
$\max\{S_{0,v_j}(i_j),, S_{1,v_j}(i_j)\}$.

If $S_{0,v_j}(i_j) > S_{1,v_j}(i_j)$, the optimal solution labels $v_j$ as a $0$-node; otherwise, it is preferable to label $v_j$ as a $1$-node. The value of $S_{0,v}(i)$ is then obtained by selecting the partition of $i$ that maximizes the sum of these optimal contributions.

The computation of $S_{1,v}(i)$ follows the same principle, with two differences. First, since $v$ itself is labeled as a coordinated node, only $i-1$ coordinated nodes remain to be distributed among its children. Second, for each child $v_h$, assigning label 0 to $v_h$ contributes one additional unit of influence through the edge $(v,v_h)$. This is captured by the term $S_{0,v_h}(i_h)+1$, while $S_{1,v_h}(i_h)$ is used when the child root is labeled as a $1-$node.

% The computation of $S_{1,v}(i)$ follows the same principle, with two differences. First, since $v$ itself is labeled as a coordinated node, only $i-1$ coordinated nodes remain to be distributed among its children. Second, the contribution $d_0(v)$ is added to account for the influence generated by $v$, which depends on how many of its children are labeled as $0$-nodes in the chosen partition.

To clarify these aspects, let's consider the following example.

\begin{Es}\label{ex:computes0s1}
Let us consider the directed tree in Fig.~\ref{fig:esempio1}. Setting $k=3$, we list the vectors $S_{0,i}$ and $S_{1,i}$ for $i = \{ v_1, \ldots , v_{10}\}$. 
First of all, note that:

\begin{enumerate}
\item $S_{0,v} = \left(0, -\infty, -\infty, -\infty \right)$ and $S_{1,v} = \left(-\infty, 0, -\infty, -\infty \right)$ for $v = \{v_5,v_6,v_7,v_8,v_9,v_{10}\}$;
    \item $S_{0,v_3} = S_{0,v_4} = (0,0,0,0)$;
    \item $S_{1,v_3}=(-\infty,3,2,1)$;
    \item $S_{1,v_4}=(-\infty,2,1,0)$.
\end{enumerate}

To understand how vectors are constructed, we initially consider node $v_2$.
For example, $S_{0,v_2}(1)=3$ can be obtained by choosing the sequence $S_{1,v_3}(1)=3, S_{0,v_4}(0)=0, S_{0,v_5}(0)=0$ related to the vectors of its children. With a similar rationale, we get $S_{0,v_2}(2)=S_{0,v_2}(3)=5$ and $S_{1,v_2}=(0,3,5,6)$. 

Let's consider the case $k = 3$. We obtain $S_{1,v_2}(3)=6$ by summing the sequence of values $S_{1,v_3}(1)=3$, $S_{1,v_4}(1)=2$, $S_{1,v_5}(0)=0$. Finally, we reach node $v_1$, that gets $S_{0,v_1}=(0,3,5,6)$ and $S_{1,v_1}=(0,1,4,6)$. Therefore, this approach provides the two optimal labels depicted in Fig.~\ref{fig:esempio1} with $6$ as maximal influence.

\begin{figure}[!ht]
    \centering
    \includegraphics[width=0.8\linewidth]{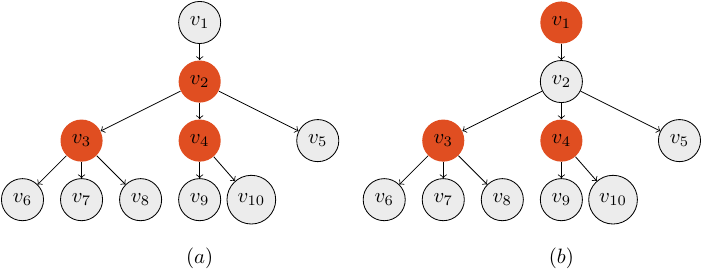}
    \caption{The two solutions of $CIMP(T,3)$. The labels of the nodes can be obtained from the computation of maximum values of the vectors $S_{0,v_1}$ and $S_{1,v_1}$, being node $v_1$ the root of the tree.}
    \label{fig:esempio1}
\end{figure}

\end{Es}

% \begin{Thm}
% \label{lem:solution}
% Given a directed tree $T$ rooted at $v$ and an integer $k$, it holds: 
% \begin{description}
% \item{$(i)$} the values $S_{0,v}(i)$ and $S_{1,v}(i)$, store the influences of the labels that are solutions of $CIMP_0(T_v,i)$ and  $CIMP_1(T_v,i)$, respectively;
% \item{$(ii)$} the problem $CIMP(T,k)$ can be solved in $O(|V|(k+\bar{d})^{\min\{k,\bar{d}\}})$ time, with $\bar{d}=\max_{v\in V}{d(v)}$.
% \end{description}
% \end{Thm}

The previous equations (\ref{eq:s0s1}) allow to compute the optimal label of $T$ in $O(|T|(k+\bar{d})^{\min\{k,\bar{d}\}})$ time, where $\bar{d}$ is the maximum degree of the nodes of $T$ \cite{DiMarco2025_Constrained_Maximization_Tree}. 
As one can easily argue, the exponential part is related to the computation of the $i$-th entries of the two vectors $S_{0,v}$ and $S_{1,v}$ that span over all the integer partitions of $i$ as a sum of $i_1,\dots,i_d$, and whose number grows exponentially in the parameter $i$.

To lower such computational cost, before proceeding with the vectors definition, we modify the tree $T$ producing an equivalent binary tree $T'$. This process replaces each node $w$ with degree $d :=  deg(w)>2$ of $T$ with a binary subtree obtained by arranging $d-1$ new (pseudo-)nodes $w^{1},\dots,w^{d-1}$ in a downward sloping line, and attaching to each of them, as a second child, one children of $w$. To clarify this approach Fig.~\ref{fig:binarization} shows an example. 

\begin{figure}[!ht]
    \centering
    \includegraphics[width=0.8\linewidth]{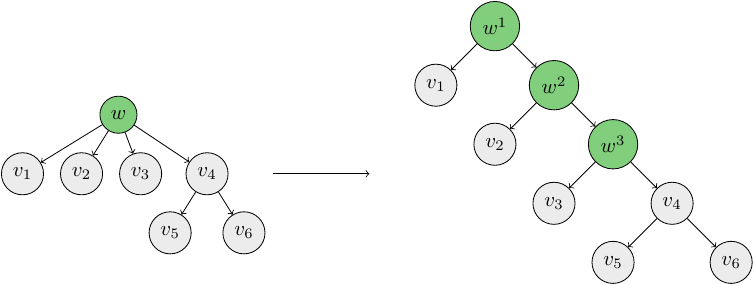}
    \caption{On the left, the tree $T$ has the node $w$ of degree $4$; on the right, the binary tree $T'$ obtained by substituting the node $w$ of $T$ by three new nodes $w_1$, $w_2$, and $w_3$. The children of the three nodes are the same of the node $w$ of $T$.}
    \label{fig:binarization}
\end{figure}

Note that the procedure to modify $T$ in $T'$ runs in $O(\lvert V \vert)$ time and $\lvert V(T') \rvert \leq 2\lvert V(T) \rvert$.
However, we highlight that new recurrence equations are needed for creating the vectors $S'_{0,w^i}$ and $S'_{1,w^i}$ relative to the nodes $w^i, i = 1, \ldots d(w)-1$. In fact, to maintain coherence, all those nodes must have the same label as the label of $w$ in $T$. 
Therefore, each node $w^i$ inherits its label from the child $w^{i+1}$. Suppose that the non-pseudo child of $w_i$ is $v_i$. The following new recurring equations defining the vectors $S'_{0,w_i}$ and $S'_{1,w_i}$, for $i = 1, \ldots d-2$:

% \begin{equation}\label{eq:s0s1new}
% \begin{cases}
%       S'_{0,w_i}(j) = \max\limits_{j_1+j_2=j}\{ S'_{0,w_{i+1}}(j_1)+\max\{S'_{0,v_i}(j_2),S'_{1,v_i}(j_2)\} \} \\
%    S'_{1,w_i}(j) = \max\limits_{j_1+j_2=j}\{ S'_{1,w_{i+1}}(j_1)+ \max\{S'_{0,v_i}(j_2),V_{1,v_i}(j_2)\} + d_0(w_i)\}
% \end{cases}
% \end{equation}

\begin{equation}\label{eq:s0s1new}
\begin{cases}
S'_{0,w_i}(j)=
\displaystyle
\max_{j_1+j_2=j}
\left[
S'_{0,w_{i+1}}(j_1)
+
\max\{S'_{0,v_i}(j_2),S'_{1,v_i}(j_2)\}
\right],
\\[1.2em]
S'_{1,w_i}(j)=
\displaystyle
\max_{j_1+j_2=j}
\left[
S'_{1,w_{i+1}}(j_1)
+
\max\{S'_{0,v_i}(j_2)+1,S'_{1,v_i}(j_2)\}
\right].
\end{cases}
\end{equation}

The term $+1$ appears only in the case where the attached original child $v_i$ is labeled as a 0-node, since then the pseudo-node $w_i$, labeled as a 1-node, contributes one unit of influence through the edge $(w_i,v_i)$. Note that the computation of $S'_{1,w_i}$ requires that the two indexes $j_1$ and $j_2$ sum to $j$ since the contribution of the label $l(w_i)=l(w_{i+1})=1$ has been already considered in $S'_{1,w_{i+1}}$.
Note that the last pseudo-node $w_{d-1}$, which is attached to the two remaining original children, is computed using the standard binary recurrence.

\begin{Thm}\label{teo:complexity}
  Given a directed tree $T=(V,E)$ rooted at $v$ and an integer $0\leq k \leq n$, with $n=|T|$, the problem $CIMP(T,k)$ can be solved in $O(nk^2)$ time.
\end{Thm}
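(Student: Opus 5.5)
The proof has two parts: correctness of the recurrences on the binarized tree $T'$, and an accounting of their cost.

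\textbf{Correctness.} First, $T'$ is obtained from $T$ in $O(n)$ time and has $|V(T')|\le 2n$. Labelings of $T$ correspond bijectively to labelings of $T'$ in which every chain of pseudo-nodes $w^1,\dots,w^{d-1}$ carries the same label as $w$. Under this correspondence the influence is preserved, because every original child of $w$ is attached to exactly one pseudo-node of the chain. Each contribution $+1$ in \eqref{eq:s0s1new} therefore counts exactly one edge $(w,v_h)$ of $T$ with $l(v_h)=0$.

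The chain must not count $w$ more than once towards the budget $k$. I will fix the convention that only the top node $w^1$ (which plays the role of $w$) consumes one unit of budget. The bottom pseudo-node $w_{d-1}$ is evaluated with the standard binary recurrence \eqref{eq:s0s1} and would otherwise subtract a unit of its own. So I will evaluate it with the budget shift $i-1$ removed, and restore that shift only at $w^1$. The recurrences \eqref{eq:s0s1new} already split $j=j_1+j_2$ without subtracting one, so only the two endpoints need care.

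With this convention, correctness follows by induction from Lemma~\ref{lem:solution} and Property~\ref{prop:maxchildren}. The optimal value for a prefix of the children of $w$ combines optimally with the optimal value for the next child, since influence is additive across disjoint subtrees once the label of $w$ is fixed. The answer is then $\max\{S_{0,r}(k),S_{1,r}(k)\}$ at the root $r$.

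\textbf{Complexity.} In $T'$ every node has at most two children. For a node $u$, each entry $S_{\cdot,u}(j)$ with $0\le j\le k$ is a maximum over at most $j+1\le k+1$ splits $j_1+j_2=j$, and each split costs $O(1)$. A node therefore costs $O(k^2)$ to fill both vectors. Over the at most $2n$ nodes this gives $O(nk^2)$. Entries with $j>|T_u|$ are set to $-\infty$ and never enlarge this bound.

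Recovering an optimal labeling, and not only its value, is done by storing the argmax split at each entry and backtracking from the root. This takes $O(n)$ additional time, after which pseudo-node labels are collapsed back onto $w$.

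\textbf{Main obstacle.} The delicate step is the endpoint accounting along each pseudo-node chain: I must make sure that $w$ consumes exactly one unit of budget, and that $+1$ is added exactly once for each $0$-labelled original child. I will handle this by an explicit induction along the chain $w_{d-1},\dots,w^1$. The invariant is that $S'_{\epsilon,w^i}(j)$ is the best influence over the subtrees of the original children attached at or below $w^i$, with $j$ labelled-$1$ nodes among those subtrees, plus the edges from $w$ to $0$-children among them, given $l(w)=\epsilon$. Applying the budget shift at $w^1$ then recovers the recurrence \eqref{eq:s0s1} for $w$ in $T$.
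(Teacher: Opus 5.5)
Your proposal is correct and follows the paper's proof. You binarize $T$ into $T'$ with at most $2n$ nodes, compute each of the $2(k+1)$ entries per node by scanning the at most $k+1$ splits $j_1+j_2=j$, and obtain $O(k^2)$ per node and $O(nk^2)$ in total; your chain invariant and backtracking step add correctness details that the paper leaves to Lemma~\ref{lem:solution}.

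One point to adjust is your ``main obstacle.'' In the paper, the unit of budget for $w$ is already consumed exactly once, at the bottom pseudo-node $w_{d-1}$ through the standard recurrence, and \eqref{eq:s0s1new} never subtracts one. So moving the shift to $w^1$ is an equivalent bookkeeping choice, not a repair of a double count.
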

\proof
By Lemma~\ref{lem:solution} the value $\max\{S_{0,v}(k),S_{1,v}(k)\}$ gives the
influence of an optimal labeling for $CIMP(T,k)$, where $v$ is the root of
$T$. We first transform $T$ into the equivalent binary tree $T'$, which
contains at most $2n$ nodes; this process runs in $O(n)$ time.

For each node $v'\in T'$ and each $j\in\{0,\ldots,k\}$, the value
$S'_{0,v'}(j)$ is computed by considering all pairs $(j_1,j_2)$ such that
$j_1+j_2=j$. There are $j+1\leq k+1$ such pairs, and therefore one entry
can be computed in $O(k)$ time. The same argument applies to
$S'_{1,v'}(j)$. Since there are $k+1$ entries for each of the two vectors,
the cost per node is $O(k^2)$.
 As $T'$ contains $O(n)$ nodes, the total
running time is $O(nk^2)$.

\qed

We underline that Theorem~\ref{teo:complexity} greatly reduces the complexity by lowering the exponential dependence on the node degree with a quadratic dependence on $k$.
\begin{Es}
Let us consider the directed tree $T$ considered in Example 1. Figure~\ref{fig:final_example} shows the construction of its equivalent binary tree $T'$. We now proceed to compute the new vectors $S'_0$ and $S'_1$ for $k=3$. In $T'$, the node $v_2$ of degree $3$ is replaced by the two nodes $v^2_1$ and $v^2_2$. The same happens to the node $v_3$ that is replaced by the two nodes $v^3_1$ and $v^3_2$. 

Since the nodes $i = \{ v_4, \ldots , v_{10}\}$ of $T'$ are not affected by the binarization process, then their vectors are equal to those of $T$, computed in Example \ref{ex:computes0s1} through equations (\ref{eq:s0s1}). 

On the other hand, we list the vectors for the new nodes:

\begin{itemize}
    \item $ S_{0,v_3^2}=(0,0,0,0) $;
    \item $S_{1,v_3^2}=(0,2,1,0)$;
    \item $ S_{0,v_2^2}=(0,2,2,1)$;
    \item $S_{1,v_2^2}=(0,2,3,2)$;
    \item $ S_{0,v_3^1}=(0,0,0,0)$;
    \item $S_{1,v_3^1}=(0,3,2,1)$;
    \item $ S_{0,v_2^1}=(0,3,5,5) $;
    \item $S_{1,v_2^1}=(0,2,5,6)$.
\end{itemize}

Note that $S_{i,v_2^2}$ and $S_{i, v_3^2}$ are computed using \eqref{eq:s0s1} while the others are computed using \eqref{eq:s0s1new}. 

Finally, the vectors of $v_1$ are the same in $T'$ and $T$, as expected. We underline that the vectors of $v_2^1$ and $v^1_3$ in $T'$ equal those of $v_2$ and $v_3$ in $T$, respectively.  
\end{Es}

\begin{figure}[!ht]
    \centering
    \includegraphics[width=0.9\linewidth]{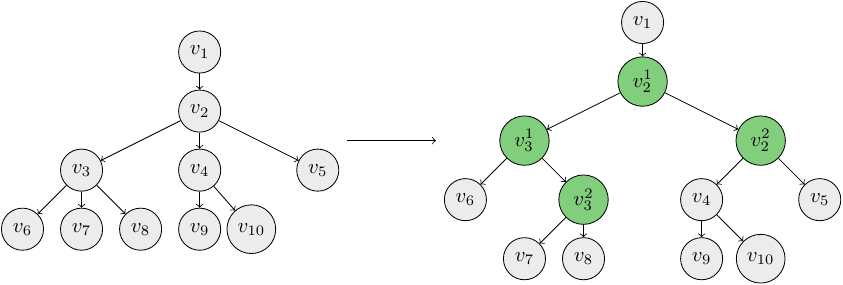}
    \caption{Binarization of the tree $T$ considered in Example \ref{ex:computes0s1}. The computation of the vectors $S'_0$ and $S'_1$ change only for the green (pseudo-)nodes $v_2^1, v_2^2, v_3^1, v_3^2$.}
    \label{fig:final_example}
\end{figure}

\section{Results}
\subsection{Influence on information cascades}
In this section, we apply our constrained algorithm to evaluate the influence achieved by coordinated accounts in real-world scenarios, specifically Twitter diffusion cascades. 
Given a cascade $c$ with $k$ coordinated accounts, we define the measure
\begin{equation}\label{eq:bounded_influence}
\rho_k(c)=\frac{I_{\mathrm{obs}}(c,k)}{I_{\mathrm{opt}}(c,k)}.
\end{equation}
where $I_{\mathrm{obs}}(c,k)$ denotes the influence achieved by the actual coordinated accounts in cascade $c$, and $I_{\mathrm{opt}}(c,k)$ denotes the optimal influence achievable by any placement of $k$ coordinated accounts in the same cascade.
This ratio provides a bounded measure of influence: the closer $\rho$ is to 1, the greater the impact of coordination. 

We apply our algorithm to the cascades obtained through the procedure described in Section~\ref{sec:data}, computing $\rho$ for each of them. To ensure both robustness and computational feasibility, we restrict the analysis to cascades containing at least $15$ nodes and at least one coordinated account. In addition, we exclude cascades containing a connected component with more than $4000$ nodes. These criteria yield a total of $4116$ cascades.
Since the reconstruction procedure may produce forests rather than single trees, each connected component is treated as an independent instance of $CIMP(T,k)$. The influence of a forest is then defined as the sum of the influences of its constituent trees.

Figure \ref{fig:comparison_fixed_k} presents the results of our analysis. 
\begin{figure}[!ht]
    \centering
    \includegraphics[width=0.9\linewidth]{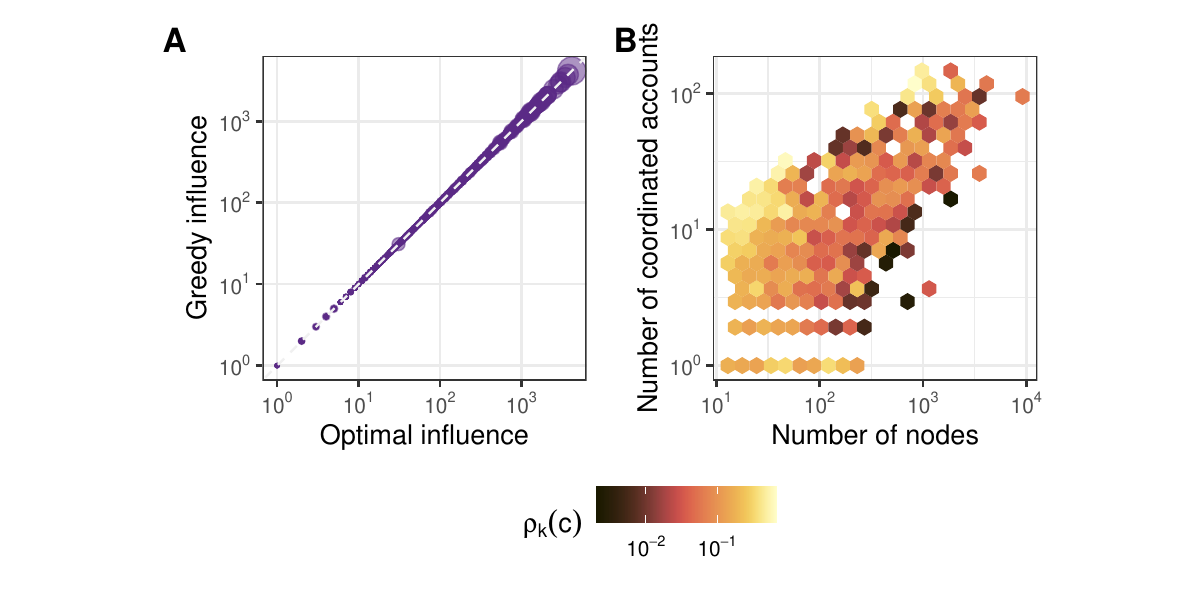}
    \caption{$(A)$ Relationship between optimal and greedy influence in Twitter/X cascades. The dashed line represents the identity relation. $(B)$ Bivariate distribution of the number of nodes and coordinated accounts in each Twitter/X cascade. Each hexagonal bin is colored according to the mean value of $\rho_k(c)$ for the cascades within the bin, using a logarithmic color scale.}
    \label{fig:comparison_fixed_k}
\end{figure}
We first aim to assess whether our algorithm achieves results comparable to those obtained by the heuristic introduced in \cite{DiMarco2025, DiMarco2025_Constrained_Maximization_Tree}.
Panel $A$ compares the optimal value (x-axis) with the greedy value (y-axis) obtained on the same tree, where $k$ is set equal to the number of coordinated accounts present in the cascade. The size of each point is proportional to the cascade size. The clustering of points near the diagonal indicates that the two algorithms yield similar values, despite the suboptimal nature of the greedy approach. However, the optimal algorithm is generally more time-efficient than the greedy approach, as shown in Section \ref{sec:comparison} of the appendix.

Panel $B$ shows the bivariate distribution of cascade size (number of nodes) and $k$ for each cascade. Each hexagonal bin is colored, using a logarithmic scale, according to the mean value of $\rho_k(c)$ for the cascades it contains. This approach ensures that only cascades of comparable size are grouped together, thereby reducing potential biases due to size heterogeneity. 

The figure indicates that high values of $\rho_k(c)$ are primarily observed in small cascades with a limited number of coordinated accounts. In contrast, as both the cascade size and $k$ increase, $\rho_k(c)$ rapidly decreases and stabilizes around $\approx 10^{-1}$, indicating that the achieved influence is often about $10\%$ of the optimal value. Although in some cases a relatively high influence is observed even in cascades with approximately $1000$ nodes, our results overall suggest a generally low efficiency of coordinated accounts, in agreement with previous findings \cite{DiMarco2025}.

\subsection{Influence on retweet networks}
In the previous section, we focused on information cascades. Given the difficulty of obtaining such data in real-world settings, we now turn to retweet networks, which we interpret as the outcome of multiple retweet cascades. As discussed in Section~\ref{sec:notation}, we assess the impact of CIB using an independent cascade model, measuring the expected number of infected users when CIB-selected nodes act as seeds.

We then compare this performance on real networks with two heuristic baselines, namely the $k$shell and random baselines introduced in Section~\ref{sec:notation}. While this comparison does not yield a strict bound on CIB's influence, it provides a meaningful benchmark by contrasting its behavior with both an established efficient strategy (seed-based) and a null model (random).

For clarity of presentation, the results in this section are obtained using a fixed infection probability $p = 0.3$ and only networks with at least $500$ nodes. The framework, however, naturally extends to other values of $p$, including edge-specific probabilities. Additional robustness analyses exploring these variations are reported in Section~\ref{sec:robustness_retweet_networks}, where we show that different values of $p$ yield consistent results. 

Figure \ref{fig:simulations_IO}$A$ shows the distribution of the fraction of infected users at the end of the simulation across heuristics and networks. We have repeated each simulation $20$ times to ensure sufficient observations.

\begin{figure}[!ht]
    \centering
    \includegraphics[width=0.8\linewidth]{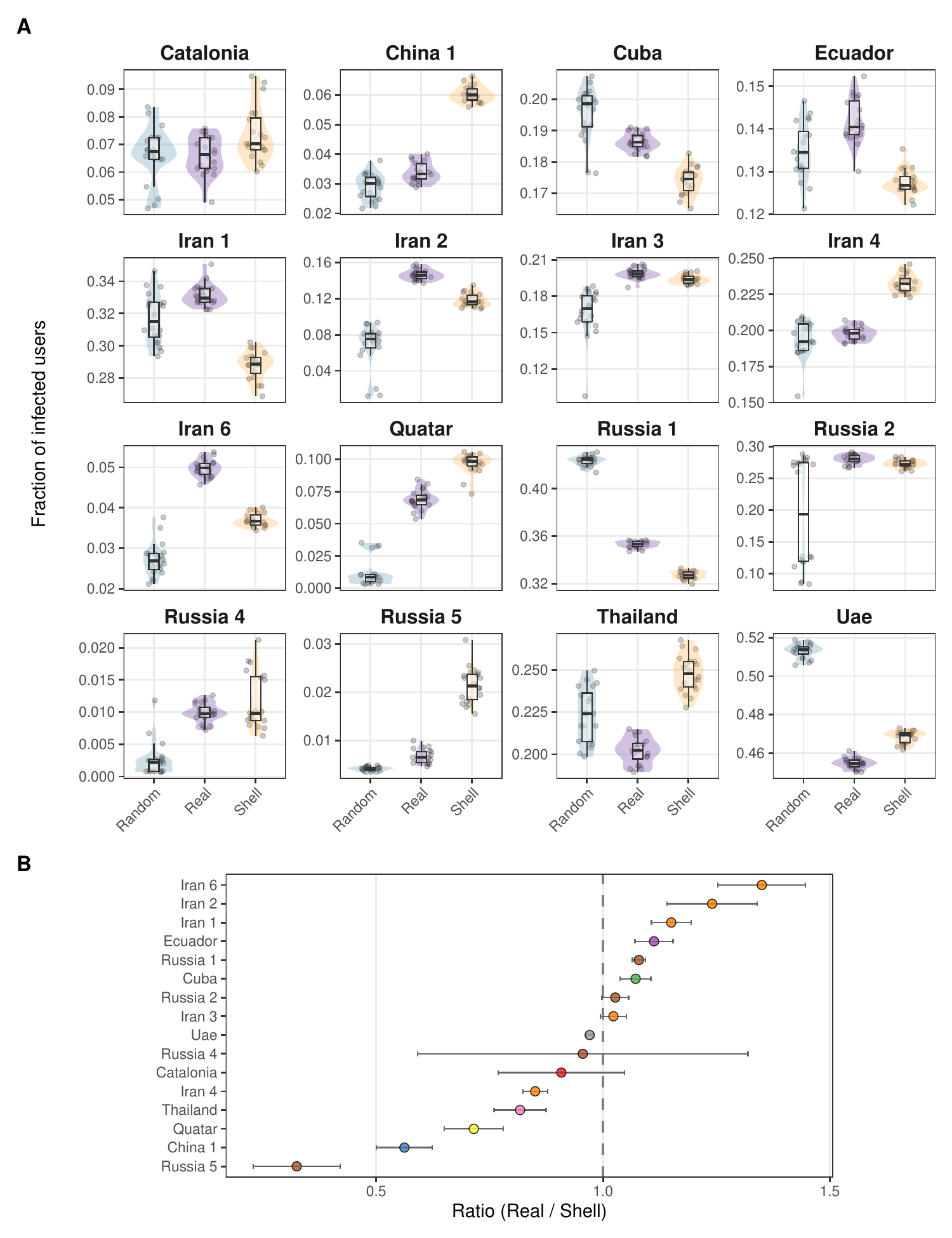}
    \caption{\textbf{(A)} Distribution of the fraction of infected nodes obtained from different seed node sets, computed over $20$ simulations for each set of seeds. \textbf{(B)} Mean and standard deviation of the ratio R~\eqref{eq:ratio} for each IO campaign. Points corresponding to the same country are shown in the same color.}
    \label{fig:simulations_IO}
\end{figure}

As shown, \textit{shell}-based heuristics generally achieve better performance, while the random baseline consistently yields the lowest results, albeit with some notable exceptions. Interestingly, CIB often outperforms the shell-based approach, for instance, in most Iran-related networks and in Ecuador. Even when this is not the case, CIB and \textit{shell}-based methods tend to exhibit similar performance.

Overall, these findings indicate that CIB actors are well-positioned within the retweet networks and, within this framework, may have substantial diffusion potential in the observed retweet networks.

These initial results also point to campaign-level heterogeneity, which we further inspect by comparing operations associated with different countries. To quantify it more systematically, we compute the average influence observed in each simulation and define the ratio
\begin{equation}\label{eq:ratio}
    R = \frac{I_{\text{real}}}{I_{\text{shell}}},
\end{equation}
which provides a numerical indicator of whether CIB outperforms or underperforms the shell-based approach. We recall that values of $R > 1$ indicate a higher influence of CIB compared to the (efficient) shell-based approach.

Figure~\ref{fig:simulations_IO}$B$ reports, for each IO operation, the mean and standard deviation of $R$, averaged over the $20$ simulations. Points corresponding to campaigns from the same country are shown in the same color. 

Consistent with previous observations, CIB nodes in most Iran-related campaigns tend to achieve higher influence, with similar patterns emerging for Ecuador and Cuba. In contrast, campaigns associated with Russia, China, and Thailand generally exhibit lower values of $R$. 

To assess whether these differences are simply driven by the number of coordinated accounts in each network, we computed the correlation between $R$ and $|V_1|$ across all networks. We found a weak, non-significant correlation ($r = 0.14$, $p = 0.61$) suggesting that the observed differences are unlikely to be driven solely by the number of coordinated users, and are consistent with the interpretation that network position plays an important role.

These results highlight a non-trivial geographical heterogeneity in the positioning and effectiveness of CIB actors. Such differences are particularly relevant, as they suggest that the potential impact of coordinated behavior is not uniform across contexts and depends on how these actors are embedded within the underlying network structure.

\subsection{Linking information cascades and retweet networks}
In the previous section, we presented results obtained using two distinct frameworks. However, it remains unclear whether the observed discrepancies arise from differences in the data or from the modeling frameworks themselves.
Therefore, it is important to relate these approaches and evaluate their performance on the same dataset.

To address this issue, we construct information cascades directly from the retweet networks associated with IO operations. This enables us to apply the cascade-based framework to the same data and assess whether it produces consistent results.

More precisely, for each network with at least $500$ nodes, we identify the node with the largest out-component, i.e., the set of nodes reachable via directed paths originating from that node. Starting from this node, we perform a breadth-first search (BFS) to extract a directed tree that approximates an information cascade of retweets.

For each cascade $c$ constructed in this way, we compute $\rho_k(c)$ (see Eq.~\eqref{eq:bounded_influence}), with $k$ equal to the number of coordinated accounts in the cascade, to assess its proximity to the optimal influence. We also note that one of the resulting cascades contains no coordinated accounts and is therefore excluded from the analysis.

The results are shown in Figure~\ref{fig:cascades_IO}, which illustrates the relationship between the cascade size and $\rho_k(c)$. The color of each point encodes the value of $k$.
\begin{figure}[!ht]
    \centering
    \includegraphics[width=0.7\linewidth]{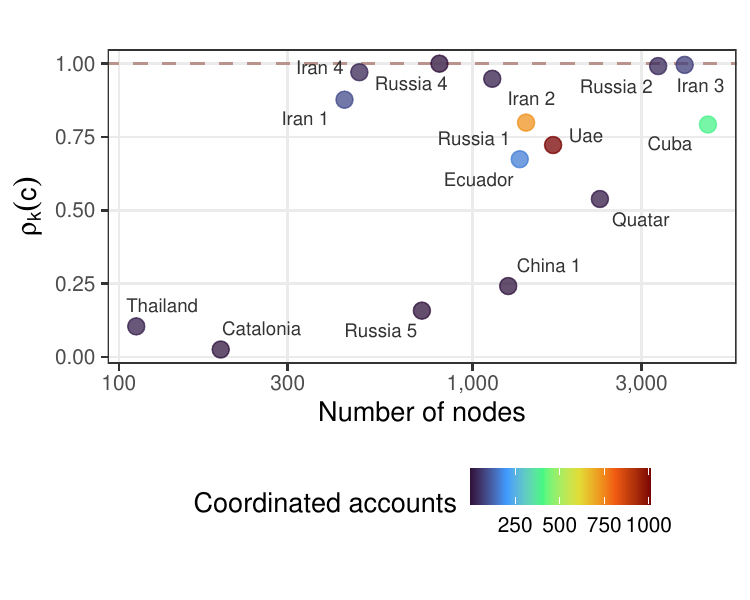}
    \caption{Impact of CIB accounts on spanning trees extracted from retweet networks. The results are consistent with those obtained from the independent cascade simulations on the corresponding retweet networks, providing further evidence of the robustness of the findings and the coherence between the cascade-based and network-based frameworks.}
    \label{fig:cascades_IO}
\end{figure}
As shown in the figure, with the exception of very small cascades, most cases achieve high values of $\rho_k(c)$, indicating performance close to the optimal solution. While the highest values are typically associated with cascades involving a small number of coordinated accounts, it is noteworthy that in several cases—such as UAE, Iran~2, Russia~1, Cuba, and Ecuador—high values are observed even for large $k$.
Notably, the networks that achieve the highest values of $\rho_k(c)$ are also those with the largest $R$ scores in Figure~\ref{fig:simulations_IO}(B), suggesting a strong consistency between the two approaches.

\section{Conclusions}
In this paper, we introduced two complementary frameworks for the post-hoc evaluation of coordinated behavior in online social networks. On the theoretical side, we formulated the problem of quantifying the influence of coordinated accounts in information cascades as a constrained optimization problem on directed trees and proposed a polynomial-time dynamic programming algorithm to compute its optimal solution. On the empirical side, we introduced a simulation-based framework that evaluates the influence of coordinated accounts directly on retweet networks by comparing their observed positions with meaningful baseline strategies.

Our analyses reveal a nuanced picture of the influence exerted by coordinated accounts. In the information cascades associated with the 2019 UK elections, coordinated accounts generally achieve only a small fraction of the theoretically attainable influence. In contrast, the analysis of retweet networks extracted from verified information operations shows that coordinated accounts often occupy structurally advantageous positions, achieving influence comparable to, and in some cases exceeding, that of structurally central nodes identified by the $k$-shell heuristic. Moreover, by reconstructing cascades from the same retweet networks, we show that the two frameworks produce coherent results, suggesting that these findings reflect structural properties of the networks rather than artifacts of the modeling approach. The discrepancy observed for the UK cascades may instead be explained either by the specific characteristics of the analyzed campaign, which may not have been particularly effective, or by limitations in the coordinated account detection procedure used to identify coordinated users.

Our work also has several limitations. First, the reconstruction of information cascades necessarily relies on heuristics, as the true diffusion paths are generally unavailable from publicly accessible social media data. More broadly, detailed cascade data remain scarce, limiting the applicability of exact cascade-based analyses. Second, our empirical evaluation is restricted to Twitter/X, and the conclusions may not directly generalize to other social platforms, where interaction mechanisms and coordination strategies may differ substantially. Finally, our framework is inherently post-hoc: it quantifies the influence of coordinated behavior only after diffusion has taken place and, in its current form, cannot be used to assess the impact of coordinated campaigns in real time. A further limitation concerns the nature of the labels used in the analysis. In the UK election dataset, coordinated accounts are identified through a behavioral detection pipeline and should therefore be interpreted as accounts exhibiting coordinated behavior rather than verified inauthentic behavior. In contrast, the information-operation datasets contain accounts released by the platform as part of state-backed operations. Moreover, retweet networks aggregate interactions over time, whereas the independent cascade model abstracts away from temporal ordering, repeated exposures, platform ranking mechanisms, content heterogeneity, and user-level differences in susceptibility. The simulation results should therefore be interpreted as estimates of structural diffusion potential rather than as reconstructions of the actual temporal spreading process. 

Despite these limitations, the proposed frameworks provide practical tools for the retrospective evaluation of coordinated behavior. In particular, the polynomial-time optimal algorithm and the network-based simulation framework could be directly applied by social media platforms or other organizations with access to complete diffusion data, enabling a principled assessment of the effectiveness of coordinated campaigns. Such evaluations could complement existing detection methods by quantifying not only whether coordinated behavior exists, but also its potential impact on information diffusion, thereby supporting more informed mitigation strategies against influence operations.

% \section{Acknowledgments}

%%
%% The acknowledgments section is defined using the "acks" environment
%% (and NOT an unnumbered section). This ensures the proper
%% identification of the section in the article metadata, and the
%% consistent spelling of the heading.

% \begin{acks}

% \end{acks}

\bibliographystyle{ACM-Reference-Format}
\bibliography{bibliography}

@article{DiMarco2025,
  title = {Post-hoc Evaluation of Nodes Influence in Information Cascades: The Case of Coordinated Accounts},
  volume = {19},
  ISSN = {1559-114X},
  url = {http://dx.doi.org/10.1145/3700644},
  DOI = {10.1145/3700644},
  number = {2},
  journal = {ACM Transactions on the Web},
  publisher = {Association for Computing Machinery (ACM)},
  author = {Di Marco,  Niccolò and Brunetti,  Sara and Cinelli,  Matteo and Quattrociocchi,  Walter},
  year = {2025},
  month = may,
  pages = {1–19}
}

@inproceedings{hristakieva2022spread,
  title={The spread of propaganda by coordinated communities on social media},
  author={Hristakieva, Kristina and Cresci, Stefano and Da San Martino, Giovanni and Conti, Mauro and Nakov, Preslav},
  booktitle={Proceedings of the 14th ACM Web Science Conference 2022},
  pages={191--201},
  year={2022}
}

@inproceedings{pacheco2021uncovering,
  title={Uncovering coordinated networks on social media: methods and case studies},
  author={Pacheco, Diogo and Hui, Pik-Mai and Torres-Lugo, Christopher and Truong, Bao Tran and Flammini, Alessandro and Menczer, Filippo},
  booktitle={Proceedings of the international AAAI conference on web and social media},
  volume={15},
  pages={455--466},
  year={2021}
}

@article{luceri2023unmasking,
  title={Unmasking the Web of Deceit: Uncovering Coordinated Activity to Expose Information Operations on Twitter},
  author={Luceri, Luca and Pant{\`e}, Valeria and Burghardt, Keith and Ferrara, Emilio},
  journal={arXiv preprint arXiv:2310.09884},
  year={2023}
}

@inbook{DiMarco2025_Constrained_Maximization_Tree,
  title = {On the Constrained Maximization of Influence Over a Directed Tree Structure},
  ISBN = {9783032095442},
  ISSN = {1611-3349},
  url = {http://dx.doi.org/10.1007/978-3-032-09544-2_17},
  DOI = {10.1007/978-3-032-09544-2_17},
  booktitle = {Discrete Geometry and Mathematical Morphology},
  publisher = {Springer Nature Switzerland},
  author = {Di Marco,  Niccolò and Frosini,  Andrea and Nakano,  Shinichi},
  year = {2025},
  month = nov,
  pages = {237–248}
}

@article{nwala2023language,
  title={A language framework for modeling social media account behavior},
  author={Nwala, Alexander C and Flammini, Alessandro and Menczer, Filippo},
  journal={EPJ Data Science},
  volume={12},
  number={1},
  pages={33},
  year={2023},
  publisher={Springer Berlin Heidelberg}
}

@misc{dataset_menczer,
  doi = {10.5281/ZENODO.14189193},
  url = {https://zenodo.org/doi/10.5281/zenodo.14189193},
  author = {Se\c{c}kin,  \"{O}zg\"{u}r Can and Pote,  Manita and Nwala,  Alexander and Yin,  Lake and Luceri,  Luca and flammini,  alessandro and Menczer,  Filippo},
  title = {Labeled Datasets for Research on Information Operations},
  publisher = {Zenodo},
  year = {2024},
  copyright = {Creative Commons Attribution Non Commercial No Derivatives 4.0 International}
}

@inproceedings{Kempe2003,
  series = {KDD03},
  title = {Maximizing the spread of influence through a social network},
  url = {http://dx.doi.org/10.1145/956750.956769},
  DOI = {10.1145/956750.956769},
  booktitle = {Proceedings of the ninth ACM SIGKDD international conference on Knowledge discovery and data mining},
  publisher = {ACM},
  author = {Kempe,  David and Kleinberg,  Jon and Tardos,  Éva},
  year = {2003},
  month = aug,
  collection = {KDD03}
}

@article{Serrano2009,
  title = {Extracting the multiscale backbone of complex weighted networks},
  volume = {106},
  ISSN = {1091-6490},
  url = {http://dx.doi.org/10.1073/pnas.0808904106},
  DOI = {10.1073/pnas.0808904106},
  number = {16},
  journal = {Proceedings of the National Academy of Sciences},
  publisher = {Proceedings of the National Academy of Sciences},
  author = {Serrano,  M. Ángeles and Boguñá,  Marián and Vespignani,  Alessandro},
  year = {2009},
  month = apr,
  pages = {6483–6488}
}

@inbook{Kempe2005,
  title = {Influential Nodes in a Diffusion Model for Social Networks},
  ISBN = {9783540316916},
  ISSN = {1611-3349},
  url = {http://dx.doi.org/10.1007/11523468_91},
  DOI = {10.1007/11523468_91},
  booktitle = {Automata,  Languages and Programming},
  publisher = {Springer Berlin Heidelberg},
  author = {Kempe,  David and Kleinberg,  Jon and Tardos,  Éva},
  year = {2005},
  pages = {1127–1138}
}

@inproceedings{Mossel2007,
  series = {STOC07},
  title = {On the submodularity of influence in social networks},
  url = {http://dx.doi.org/10.1145/1250790.1250811},
  DOI = {10.1145/1250790.1250811},
  booktitle = {Proceedings of the thirty-ninth annual ACM symposium on Theory of computing},
  publisher = {ACM},
  author = {Mossel,  Elchanan and Roch,  Sebastien},
  year = {2007},
  month = jun,
  pages = {128–134},
  collection = {STOC07}
}

@article{Kitsak2010,
  title = {Identification of influential spreaders in complex networks},
  volume = {6},
  ISSN = {1745-2481},
  url = {http://dx.doi.org/10.1038/nphys1746},
  DOI = {10.1038/nphys1746},
  number = {11},
  journal = {Nature Physics},
  publisher = {Springer Science and Business Media LLC},
  author = {Kitsak,  Maksim and Gallos,  Lazaros K. and Havlin,  Shlomo and Liljeros,  Fredrik and Muchnik,  Lev and Stanley,  H. Eugene and Makse,  Hernán A.},
  year = {2010},
  month = aug,
  pages = {888–893}
}

@article{Seidman1983,
  title = {Network structure and minimum degree},
  volume = {5},
  ISSN = {0378-8733},
  url = {http://dx.doi.org/10.1016/0378-8733(83)90028-X},
  DOI = {10.1016/0378-8733(83)90028-x},
  number = {3},
  journal = {Social Networks},
  publisher = {Elsevier BV},
  author = {Seidman,  Stephen B.},
  year = {1983},
  month = sep,
  pages = {269–287}
}

@article{Carmi2007,
  title = {A model of Internet topology usingk-shell decomposition},
  volume = {104},
  ISSN = {1091-6490},
  url = {http://dx.doi.org/10.1073/pnas.0701175104},
  DOI = {10.1073/pnas.0701175104},
  number = {27},
  journal = {Proceedings of the National Academy of Sciences},
  publisher = {Proceedings of the National Academy of Sciences},
  author = {Carmi,  Shai and Havlin,  Shlomo and Kirkpatrick,  Scott and Shavitt,  Yuval and Shir,  Eran},
  year = {2007},
  month = jul,
  pages = {11150–11154}
}

@article{cinelli2022coordinated,
  title={Coordinated inauthentic behavior and information spreading on twitter},
  author={Cinelli, Matteo and Cresci, Stefano and Quattrociocchi, Walter and Tesconi, Maurizio and Zola, Paola},
  journal={Decision Support Systems},
  volume={160},
  pages={113819},
  year={2022},
  publisher={Elsevier}
}

@article{Tardelli2024temporal,
  title = {Temporal dynamics of coordinated online behavior: Stability,  archetypes,  and influence},
  volume = {121},
  ISSN = {1091-6490},
  url = {http://dx.doi.org/10.1073/pnas.2307038121},
  DOI = {10.1073/pnas.2307038121},
  number = {20},
  journal = {Proceedings of the National Academy of Sciences},
  publisher = {Proceedings of the National Academy of Sciences},
  author = {Tardelli,  Serena and Nizzoli,  Leonardo and Tesconi,  Maurizio and Conti,  Mauro and Nakov,  Preslav and Da San Martino,  Giovanni and Cresci,  Stefano},
  year = {2024},
  month = may 
}

@article{Cinelli2022,
  title = {Coordinated inauthentic behavior and information spreading on Twitter},
  volume = {160},
  ISSN = {0167-9236},
  url = {http://dx.doi.org/10.1016/j.dss.2022.113819},
  DOI = {10.1016/j.dss.2022.113819},
  journal = {Decision Support Systems},
  publisher = {Elsevier BV},
  author = {Cinelli,  Matteo and Cresci,  Stefano and Quattrociocchi,  Walter and Tesconi,  Maurizio and Zola,  Paola},
  year = {2022},
  month = sep,
  pages = {113819}
}

@article{weber2021amplifying,
	author = {Weber, Derek and Neumann, Frank},
	journal = {Social Network Analysis and Mining},
	number = {1},
	year = {2021},
	pages = {},
	publisher = {{Springer Science and Business Media LLC}},
	title = {Amplifying influence through coordinated behaviour in social networks},
	volume = {11},
}

@article{ferrara2016rise,
  title={The rise of social bots},
  author={Ferrara, Emilio and Varol, Onur and Davis, Clayton and Menczer, Filippo and Flammini, Alessandro},
  journal={Communications of the ACM},
  volume={59},
  number={7},
  pages={96--104},
  year={2016},
  publisher={ACM New York, NY, USA}
}

@article{Cresci2020decade,
  title = {A decade of social bot detection},
  volume = {63},
  ISSN = {1557-7317},
  url = {http://dx.doi.org/10.1145/3409116},
  DOI = {10.1145/3409116},
  number = {10},
  journal = {Communications of the ACM},
  publisher = {Association for Computing Machinery (ACM)},
  author = {Cresci,  Stefano},
  year = {2020},
  month = sep,
  pages = {72–83}
}

@inproceedings{Cheng2017anyone,
  series = {CSCW ’17},
  title = {Anyone Can Become a Troll: Causes of Trolling Behavior in Online Discussions},
  url = {http://dx.doi.org/10.1145/2998181.2998213},
  DOI = {10.1145/2998181.2998213},
  booktitle = {Proceedings of the 2017 ACM Conference on Computer Supported Cooperative Work and Social Computing},
  publisher = {ACM},
  author = {Cheng,  Justin and Bernstein,  Michael and Danescu-Niculescu-Mizil,  Cristian and Leskovec,  Jure},
  year = {2017},
  month = feb,
  pages = {1217–1230},
  collection = {CSCW ’17}
}

@article{Nizzoli2021,
  title = {Coordinated Behavior on Social Media in 2019 UK General Election},
  volume = {15},
  ISSN = {2162-3449},
  url = {http://dx.doi.org/10.1609/icwsm.v15i1.18074},
  DOI = {10.1609/icwsm.v15i1.18074},
  journal = {Proceedings of the International AAAI Conference on Web and Social Media},
  publisher = {Association for the Advancement of Artificial Intelligence (AAAI)},
  author = {Nizzoli,  Leonardo and Tardelli,  Serena and Avvenuti,  Marco and Cresci,  Stefano and Tesconi,  Maurizio},
  year = {2021},
  month = may,
  pages = {443–454}
}

@inbook{Shu2020,
  title = {Mining Disinformation and Fake News: Concepts,  Methods,  and Recent Advancements},
  ISBN = {9783030426996},
  ISSN = {2190-5436},
  url = {http://dx.doi.org/10.1007/978-3-030-42699-6_1},
  DOI = {10.1007/978-3-030-42699-6_1},
  booktitle = {Disinformation,  Misinformation,  and Fake News in Social Media},
  publisher = {Springer International Publishing},
  author = {Shu,  Kai and Wang,  Suhang and Lee,  Dongwon and Liu,  Huan},
  year = {2020},
  pages = {1–19}
}

@article{Starbird2019disinformation,
  title = {Disinformation as Collaborative Work: Surfacing the Participatory Nature of Strategic Information Operations},
  volume = {3},
  ISSN = {2573-0142},
  url = {http://dx.doi.org/10.1145/3359229},
  DOI = {10.1145/3359229},
  number = {CSCW},
  journal = {Proceedings of the ACM on Human-Computer Interaction},
  publisher = {Association for Computing Machinery (ACM)},
  author = {Starbird,  Kate and Arif,  Ahmer and Wilson,  Tom},
  year = {2019},
  month = nov,
  pages = {1–26}
}

@article{stella2018bots,
  title={Bots increase exposure to negative and inflammatory content in online social systems},
  author={Stella, Massimo and Ferrara, Emilio and De Domenico, Manlio},
  journal={Proceedings of the National Academy of Sciences},
  volume={115},
  number={49},
  pages={12435--12440},
  year={2018},
  publisher={National Academy of Sciences}
}

@inproceedings{luceri2019red,
  title={Red bots do it better: Comparative analysis of social bot partisan behavior},
  author={Luceri, Luca and Deb, Ashok and Badawy, Adam and Ferrara, Emilio},
  booktitle={Companion proceedings of the 2019 world wide web conference},
  pages={1007--1012},
  year={2019}
}

@article{keller2020political,
	author = {Keller, Franziska B. and Schoch, David and Stier, Sebastian and Yang, JungHwan},
	journal = {Political Communication},
	number = {2},
	year = {2020},
	pages = {256--280},
	publisher = {Informa UK Limited},
	title = {Political {Astroturfing} on {Twitter}: How to {Coordinate} a {Disinformation} {Campaign}},
	volume = {37},
}

@article{schoch2022coordination,
	author = {Schoch, David and Keller, Franziska B. and Stier, Sebastian and Yang, JungHwan},
	journal = {Scientific Reports},
	number = {1},
	year = {2022},
	pages = {},
	publisher = {{Springer Science and Business Media LLC}},
	title = {Coordination patterns reveal online political astroturfing across the world},
	volume = {12},
}

@article{Ratkiewicz2021detecting,
  title = {Detecting and Tracking Political Abuse in Social Media},
  volume = {5},
  ISSN = {2162-3449},
  url = {http://dx.doi.org/10.1609/icwsm.v5i1.14127},
  DOI = {10.1609/icwsm.v5i1.14127},
  number = {1},
  journal = {Proceedings of the International AAAI Conference on Web and Social Media},
  publisher = {Association for the Advancement of Artificial Intelligence (AAAI)},
  author = {Ratkiewicz,  Jacob and Conover,  Michael and Meiss,  Mark and Goncalves,  Bruno and Flammini,  Alessandro and Menczer,  Filippo},
  year = {2021},
  month = aug,
  pages = {297–304}
}

@inproceedings{ratkiewicz2011truthy,
  title={Truthy: mapping the spread of astroturf in microblog streams},
  author={Ratkiewicz, Jacob and Conover, Michael and Meiss, Mark and Gon{\c{c}}alves, Bruno and Patil, Snehal and Flammini, Alessandro and Menczer, Filippo},
  booktitle={Proceedings of the 20th international conference companion on World wide web},
  pages={249--252},
  year={2011}
}

@inproceedings{badawy2018analyzing,
author = {Badawy, Adam and Ferrara, Emilio and Lerman, Kristina},
title = {Analyzing the digital traces of political manipulation: the 2016 russian interference Twitter campaign},
year = {2020},
isbn = {9781538660515},
publisher = {IEEE Press},
booktitle = {Proceedings of the 2018 IEEE/ACM International Conference on Advances in Social Networks Analysis and Mining},
pages = {258–265},
numpages = {8},
location = {Barcelona, Spain},
series = {ASONAM '18}
}

@article{Cinelli2019information,
 ISSN = {14453312, 14453347},
 URL = {https://www.jstor.org/stable/26894683},
 author = {M Cinelli and M Conti and L Finos and F Grisolia and P Kralj Novak and A Peruzzi and M Tesconi and F Zollo and W Quattrociocchi},
 journal = {Journal of Information Warfare},
 number = {3},
 pages = {83--98},
 publisher = {Peregrine Technical Solutions},
 title = {(Mis)Information Operations: An Integrated Perspective},
 urldate = {2025-06-19},
 volume = {18},
 year = {2019}
}

@article{bennett2012logic,
  title={The logic of connective action: Digital media and the personalization of contentious politics},
  author={Bennett, W Lance and Segerberg, Alexandra},
  journal={Information, communication \& society},
  volume={15},
  number={5},
  pages={739--768},
  year={2012},
  publisher={Taylor \& Francis}
}

@book{weller2013twitter,
  author = {Katrin Welle and Axel Bruns and Jean Burgess and Merja Mahrt and Cornelius Puschmann},
  title = {Twitter and Society},
  ISBN = {9781433121708},
  url = {http://dx.doi.org/10.3726/978-1-4539-1170-9},
  DOI = {10.3726/978-1-4539-1170-9},
  publisher = {Peter Lang US},
  year = {2014},
  month = jan 
}

@article{Centola2018,
  title = {Experimental evidence for tipping points in social convention},
  volume = {360},
  ISSN = {1095-9203},
  url = {http://dx.doi.org/10.1126/science.aas8827},
  DOI = {10.1126/science.aas8827},
  number = {6393},
  journal = {Science},
  publisher = {American Association for the Advancement of Science (AAAS)},
  author = {Centola,  Damon and Becker,  Joshua and Brackbill,  Devon and Baronchelli,  Andrea},
  year = {2018},
  month = jun,
  pages = {1116–1119}
}

@inproceedings{castillo2014characterizing,
  series = {CSCW’14},
  title = {Characterizing the life cycle of online news stories using social media reactions},
  url = {http://dx.doi.org/10.1145/2531602.2531623},
  DOI = {10.1145/2531602.2531623},
  booktitle = {Proceedings of the 17th ACM conference on Computer supported cooperative work \&amp; social computing},
  publisher = {ACM},
  author = {Castillo,  Carlos and El-Haddad,  Mohammed and Pfeffer,  Jurgen and Stempeck,  Matt},
  year = {2014},
  month = feb,
  pages = {211–223},
  collection = {CSCW’14}
}

@article{Ashery2025,
  title = {Emergent social conventions and collective bias in LLM populations},
  volume = {11},
  ISSN = {2375-2548},
  url = {http://dx.doi.org/10.1126/sciadv.adu9368},
  DOI = {10.1126/sciadv.adu9368},
  number = {20},
  journal = {Science Advances},
  publisher = {American Association for the Advancement of Science (AAAS)},
  author = {Ashery,  Ariel Flint and Aiello,  Luca Maria and Baronchelli,  Andrea},
  year = {2025},
  month = may 
}

@article{Flint2025,
  doi = {10.48550/ARXIV.2510.22422},
  url = {https://arxiv.org/abs/2510.22422},
  author = {Flint,  Ariel and Aiello,  Luca Maria and Pastor-Satorras,  Romualdo and Baronchelli,  Andrea},
  title = {Group size effects and collective misalignment in LLM multi-agent systems},
  publisher = {arXiv},
  year = {2025},
  copyright = {arXiv.org perpetual,  non-exclusive license}
}

@article{Schroeder2025,
  doi = {10.48550/ARXIV.2506.06299},
  url = {https://arxiv.org/abs/2506.06299},
  author = {Schroeder,  Daniel Thilo and Cha,  Meeyoung and Baronchelli,  Andrea and Bostrom,  Nick and Christakis,  Nicholas A. and Garcia,  David and Goldenberg,  Amit and Kyrychenko,  Yara and Leyton-Brown,  Kevin and Lutz,  Nina and Marcus,  Gary and Menczer,  Filippo and Pennycook,  Gordon and Rand,  David G. and Ressa,  Maria and Schweitzer,  Frank and Summerfield,  Christopher and Tang,  Audrey and Van Bavel,  Jay J. and van der Linden,  Sander and Song,  Dawn and Kunst,  Jonas R.},
  title = {How Malicious AI Swarms Can Threaten Democracy: The Fusion of Agentic AI and LLMs Marks a New Frontier in Information Warfare},
  publisher = {arXiv},
  year = {2025},
  copyright = {arXiv.org perpetual,  non-exclusive license}
}

@article{Orland2025,
  doi = {10.48550/ARXIV.2510.25003},
  url = {https://arxiv.org/abs/2510.25003},
  author = {Orlando,  Gian Marco and Ye,  Jinyi and La Gatta,  Valerio and Saeedi,  Mahdi and Moscato,  Vincenzo and Ferrara,  Emilio and Luceri,  Luca},
  title = {Emergent Coordinated Behaviors in Networked LLM Agents: Modeling the Strategic Dynamics of Information Operations},
  publisher = {arXiv},
  year = {2025},
  copyright = {Creative Commons Attribution Share Alike 4.0 International}
}

@inproceedings{alothali2018detecting,
  title = {Detecting Social Bots on Twitter: A Literature Review},
  url = {http://dx.doi.org/10.1109/INNOVATIONS.2018.8605995},
  DOI = {10.1109/innovations.2018.8605995},
  booktitle = {2018 International Conference on Innovations in Information Technology (IIT)},
  publisher = {IEEE},
  author = {Alothali,  Eiman and Zaki,  Nazar and Mohamed,  Elfadil A. and Alashwal,  Hany},
  year = {2018},
  month = nov,
  pages = {175–180}
}

@inproceedings{Pant2025beyond,
  series = {WWW ’25},
  title = {Beyond Interaction Patterns: Assessing Claims of Coordinated Inter-State Information Operations on Twitter/X},
  url = {http://dx.doi.org/10.1145/3701716.3715575},
  DOI = {10.1145/3701716.3715575},
  booktitle = {Companion Proceedings of the ACM on Web Conference 2025},
  publisher = {ACM},
  author = {Pantè,  Valeria and Axelrod,  David and Flammini,  Alessandro and Menczer,  Filippo and Ferrara,  Emilio and Luceri,  Luca},
  year = {2025},
  month = may,
  pages = {1234–1238},
  collection = {WWW ’25}
}

@article{Minici2025iohunter,
  title = {IOHunter: Graph Foundation Model to Uncover Online Information Operations},
  volume = {39},
  ISSN = {2159-5399},
  url = {http://dx.doi.org/10.1609/aaai.v39i27.35046},
  DOI = {10.1609/aaai.v39i27.35046},
  number = {27},
  journal = {Proceedings of the AAAI Conference on Artificial Intelligence},
  publisher = {Association for the Advancement of Artificial Intelligence (AAAI)},
  author = {Minici,  Marco and Luceri,  Luca and Fabbri,  Francesco and Ferrara,  Emilio},
  year = {2025},
  month = apr,
  pages = {28258–28266}
}

@inproceedings{Cinus2025exposing,
  series = {WWW ’25},
  title = {Exposing Cross-Platform Coordinated Inauthentic Activity in the Run-Up to the 2024 U.S. Election},
  url = {http://dx.doi.org/10.1145/3696410.3714698},
  DOI = {10.1145/3696410.3714698},
  booktitle = {Proceedings of the ACM on Web Conference 2025},
  publisher = {ACM},
  author = {Cinus,  Federico and Minici,  Marco and Luceri,  Luca and Ferrara,  Emilio},
  year = {2025},
  month = apr,
  pages = {541–559},
  collection = {WWW ’25}
}

@article{Cinelli2021dynamics,
  title = {Dynamics of online hate and misinformation},
  volume = {11},
  ISSN = {2045-2322},
  url = {http://dx.doi.org/10.1038/s41598-021-01487-w},
  DOI = {10.1038/s41598-021-01487-w},
  number = {1},
  journal = {Scientific Reports},
  publisher = {Springer Science and Business Media LLC},
  author = {Cinelli,  Matteo and Pelicon,  Andraž and Mozetič,  Igor and Quattrociocchi,  Walter and Novak,  Petra Kralj and Zollo,  Fabiana},
  year = {2021},
  month = nov 
}

@article{Loru2024,
  title = {The influence of coordinated behavior on toxicity},
  volume = {43–44},
  ISSN = {2468-6964},
  url = {http://dx.doi.org/10.1016/j.osnem.2024.100289},
  DOI = {10.1016/j.osnem.2024.100289},
  journal = {Online Social Networks and Media},
  publisher = {Elsevier BV},
  author = {Loru,  Edoardo and Cinelli,  Matteo and Tesconi,  Maurizio and Quattrociocchi,  Walter},
  year = {2024},
  month = nov,
  pages = {100289}
}

@article{Cresci2019capability,
  title = {On the capability of evolved spambots to evade detection via genetic engineering},
  volume = {9},
  ISSN = {2468-6964},
  url = {http://dx.doi.org/10.1016/j.osnem.2018.10.005},
  DOI = {10.1016/j.osnem.2018.10.005},
  journal = {Online Social Networks and Media},
  publisher = {Elsevier BV},
  author = {Cresci,  Stefano and Petrocchi,  Marinella and Spognardi,  Angelo and Tognazzi,  Stefano},
  year = {2019},
  month = jan,
  pages = {1–16}
}

@misc{CIB_Survey,
  doi = {10.48550/ARXIV.2408.01257},
  url = {https://arxiv.org/abs/2408.01257},
  author = {Mannocci,  Lorenzo and Mazza,  Michele and Monreale,  Anna and Tesconi,  Maurizio and Cresci,  Stefano},
  title = {Detection and Characterization of Coordinated Online Behavior: A Survey},
  publisher = {arXiv},
  year = {2024},
  copyright = {arXiv.org perpetual,  non-exclusive license}
}

@article{Loru2026,
  title = {A Compression-Based Approach to Detecting Automated and Coordinated Behavior on Social Media},
  volume = {20},
  ISSN = {1556-472X},
  url = {http://dx.doi.org/10.1145/3778356},
  DOI = {10.1145/3778356},
  number = {2},
  journal = {ACM Transactions on Knowledge Discovery from Data},
  publisher = {Association for Computing Machinery (ACM)},
  author = {Loru,  Edoardo and Di Marco,  Niccolò and Cinelli,  Matteo and Quattrociocchi,  Walter},
  year = {2026},
  month = Jan,
  pages = {1–25}
}

@article{woolley2020bots,
  title={Bots and computational propaganda: Automation for communication and control},
  author={Woolley, Samuel C},
  journal={Social media and democracy: The state of the field, prospects for reform},
  pages={89--110},
  year={2020},
  publisher={Cambridge University Press Cambridge}
}

@article{eady2023exposure,
  title={Exposure to the Russian Internet Research Agency foreign influence campaign on Twitter in the 2016 US election and its relationship to attitudes and voting behavior},
  author={Eady, Gregory and Paskhalis, Tom and Zilinsky, Jan and Bonneau, Richard and Nagler, Jonathan and Tucker, Joshua A},
  journal={Nature communications},
  volume={14},
  number={1},
  pages={62},
  year={2023},
  publisher={Nature Publishing Group UK London}
}

@article{guardian_russia_2024,
  author       = {Borrger, Julian and Singh, Maanvi and Pengelly, Martin},
  title        = {US accuses Russia of elaborate online campaigns to influence 2024 voters},
  journal      = {The Guardian},
  year         = {2024},
  month        = sep,
  day          = {4},
  url          = {https://www.theguardian.com/us-news/article/2024/sep/04/russia-accused-of-trying-to-influence-us-voters-through-online-campaign},
  note         = {Accessed: June 22, 2026}
}

@article{serafino2024suspended,
  title={Suspended accounts align with the Internet Research Agency misinformation campaign to influence the 2016 US election},
  author={Serafino, Matteo and Zhou, Zhenkun and Andrade Jr, Jos{\'e} S and Bovet, Alexandre and Makse, Hern{\'a}n},
  journal={EPJ Data Science},
  volume={13},
  number={1},
  pages={29},
  year={2024},
  publisher={Springer}
}

@article{wagner2023independence,
  title={Independence by permission},
  author={Wagner, Michael W},
  journal={Science},
  volume={381},
  number={6656},
  pages={388--391},
  year={2023},
  publisher={American Association for the Advancement of Science}
}

@misc{gleicher2018coordinated,
  author       = {Gleicher, Nathaniel},
  title        = {Coordinated Inauthentic Behavior Explained},
  year         = {2018},
  month        = dec,
  day          = {6},
  howpublished = {Meta Newsroom},
  url          = {https://about.fb.com/news/2018/12/inside-feed-coordinated-inauthentic-behavior/},
  note         = {Updated August 29, 2022}
}

@article{Kupferschmidt2026,
  title = {The misinformation accelerator},
  volume = {392},
  ISSN = {1095-9203},
  url = {http://dx.doi.org/10.1126/science.aei4485},
  DOI = {10.1126/science.aei4485},
  number = {6797},
  journal = {Science},
  publisher = {American Association for the Advancement of Science (AAAS)},
  author = {Kupferschmidt,  Kai},
  year = {2026},
  month = Apr,
  pages = {456–459}
}

%%
%% If your work has an appendix, this is the place to put it.
\clearpage
\newpage
\section{Appendices}

% If your work needs an appendix, add it before the
% ``\verb|\end{document}|'' command at the conclusion of your source
% document.

% Start the appendix with the ``\verb|appendix|'' command:
% \begin{verbatim}
%   \appendix
% \end{verbatim}
% and note that in the appendix, sections are lettered, not
% numbered.
\appendix

\section{Compare the performances of greedy and optimal algorithms}\label{sec:comparison}
In the main paper, we introduce an optimal algorithm capable of computing the maximum influence over a directed tree with a fixed number of $1$-nodes, in contrast to the greedy approach proposed in \cite{DiMarco2025}. In this section, we compare the performance of the two methods on synthetic data in order to evaluate their efficiency.

To this end, we generate four trees with sizes of 50, 100, 500, and 1000 nodes. For each tree, we consider four values of $k$, corresponding to $20\%$, $40\%$, $60\%$, and $80\%$ of the total number of nodes. We then run both the greedy and the optimal algorithms 10 times for each configuration, recording their execution times.

Table \ref{tab:greedy_vs_optimal_results} presents the results obtained on the synthetic trees. In particular, each cell reports the relative gap between the optimal and greedy solutions, defined as 

\[
\frac{I_{\mathrm{opt}}(c,k)-I_{\mathrm{greedy}}(c,k)}
{I_{\mathrm{opt}}(c,k)}\cdot 100.
\]

\begin{table}[!ht]
\centering
\caption{Relative gap (\%) of the greedy algorithm with respect to the optimal solution for different values of $N$ and $k$.} 
\label{tab:greedy_vs_optimal_results}
\begin{tabular}{ccccc}
  \hline
$N$ & $k = 0.2N$ & $k = 0.4N$ & $k = 0.6N$ & $k = 0.8N$ \\ 
  \hline
50 & 0.00 & 0.00 & 0.00 & 0.00 \\ 
  100 & 0.00 & 0.00 & 0.00 & 0.00 \\ 
  500 & 0.00 & 0.00 & 0.00 & 0.00 \\ 
  1000 & 0.78 & 0.51 & 0.00 & 0.00 \\ 
   \hline
\end{tabular}
\end{table}

The results clearly show that the greedy algorithm achieves values very close to the optimal ones, with only a few exceptions where the relative gap remains small. This indicates that the influence obtained is generally comparable to the optimal solution.

Given this, it is natural to evaluate the execution time of the two approaches to understand whether one approach is more preferable than the other. 
Figure \ref{fig:execution_time} presents the execution time of the algorithms. Each panel corresponds to a tree of a given size. The vertical bars indicate the minimum and maximum values, while the boxes represent the interquartile range (IQR), with the points denoting the median execution time.

\begin{figure}[!ht]
    \centering
    \includegraphics[width=0.9\linewidth]{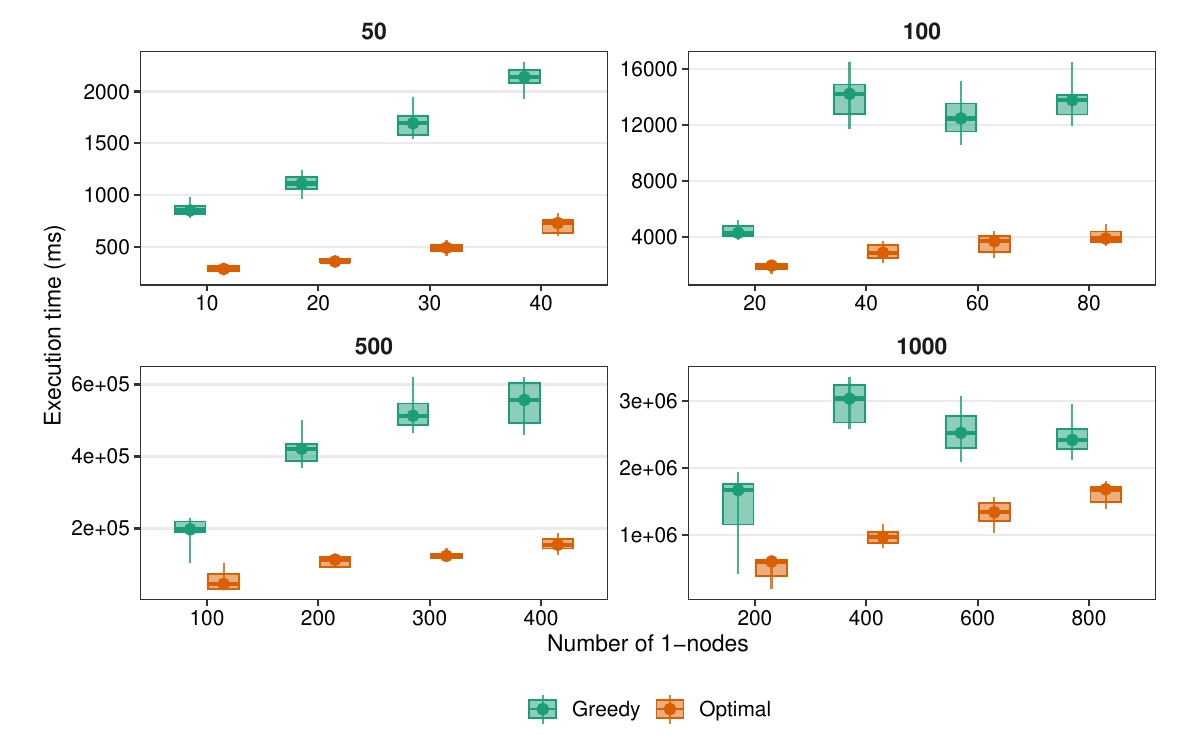}
    \caption{Distribution of execution times for the greedy and optimal approaches. Each panel corresponds to a tree of a given size. The vertical bars indicate the minimum and maximum values, while the boxes represent the interquartile range (IQR), with the points denoting the median execution time.}
    \label{fig:execution_time}
\end{figure}

We observe that the execution times of the greedy algorithm are generally higher than those of the optimal DP approach, with the gap widening as $k$ increases.
Taken together, these results indicate that, while the two methods yield comparable outcomes in terms of influence, the optimal DP approach is preferable in real-world applications due to its lower execution time.

\section{Robustness analysis with different infection probabilities}\label{sec:robustness_retweet_networks}
In the main paper, we evaluate influence on retweet networks using an infection probability of $p = 0.3$ in the independent cascade model. Here, we assess the robustness of our findings by considering a wider range of infection probabilities.
More specifically, for each set of \textit{seed nodes} (selected according to the approaches described in the Methods section), we consider infection probabilities $p = \frac{i}{10}$, for $i = 1, \ldots, 9$, and perform $20$ independent simulations for each value of $p$. Figure~\ref{fig:infections_through_p} reports the mean fraction of infected nodes across the $20$ simulations, together with a smoothed trend obtained using LOESS.

\begin{figure}[!ht]
    \centering
    \includegraphics[width=0.9\linewidth]{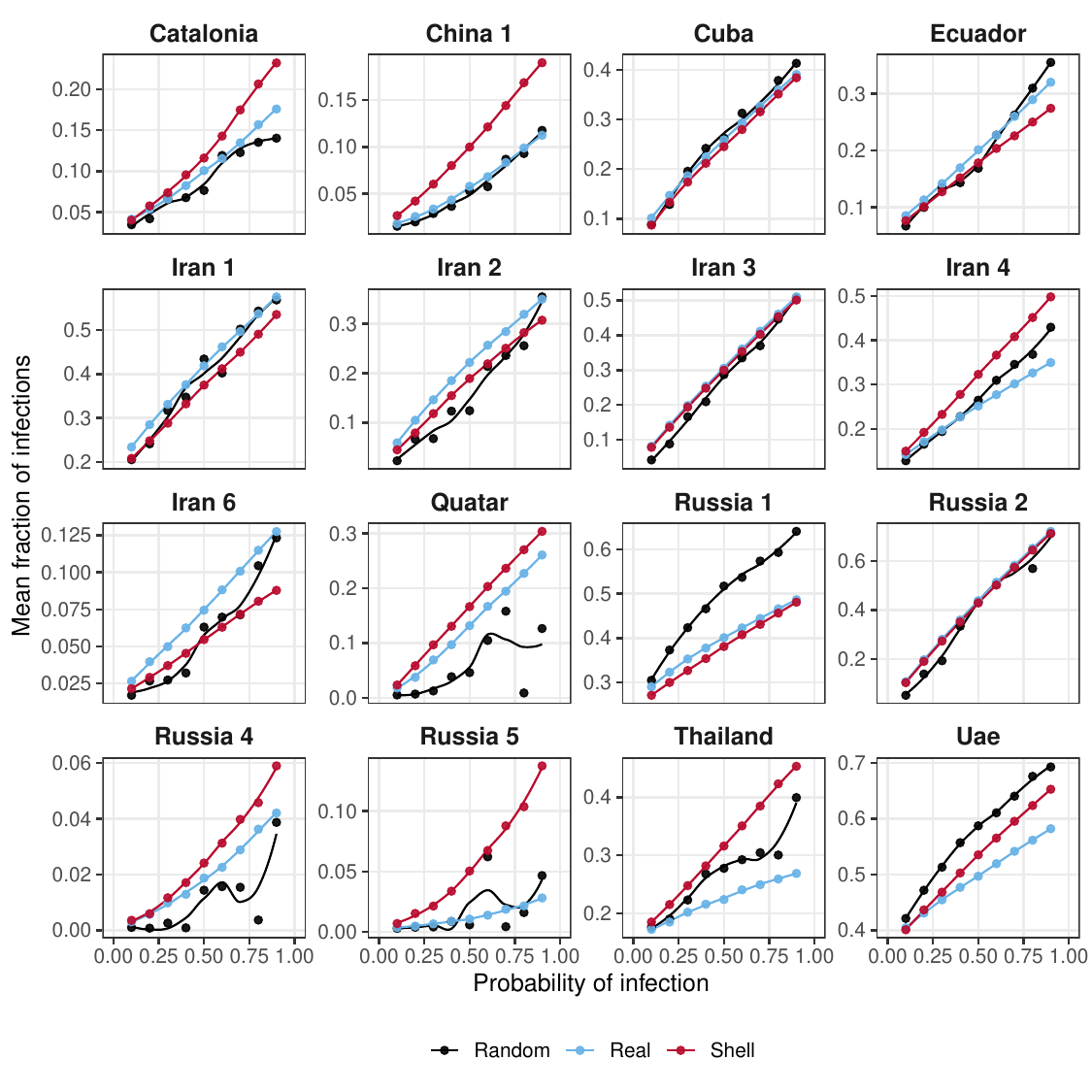}
    \caption{Mean fraction of infected nodes over $20$ independent simulations of the independent cascade model for different infection probabilities $p$ (shown on the $x$-axis). Colors denote the strategy used to select the initial seed nodes.}
    \label{fig:infections_through_p}
\end{figure}

The results indicate that the relative performance of the three strategies remains largely unchanged across the considered range of infection probabilities. As expected, increasing $p$ leads to a larger fraction of infected nodes for all strategies. The stability of these trends supports our decision to present the main analysis using the choice of $p = 0.3$, as the qualitative conclusions remain unchanged under different parameter settings.

% \section{Online Resources}

% Nam id fermentum dui. Suspendisse sagittis tortor a nulla mollis, in
% pulvinar ex pretium. Sed interdum orci quis metus euismod, et sagittis
% enim maximus. Vestibulum gravida massa ut felis suscipit
% congue. Quisque mattis elit a risus ultrices commodo venenatis eget
% dui. Etiam sagittis eleifend elementum.

% Nam interdum magna at lectus dignissim, ac dignissim lorem
% rhoncus. Maecenas eu arcu ac neque placerat aliquam. Nunc pulvinar
% massa et mattis lacinia.

\end{document}